\newcommand{\ket}[1]{|#1\rangle}
\newtheorem{definition}{Definition}
\newtheorem{theorem}{Theorem}
\newtheorem{proposition}{Proposition}
\definecolor{matty}{rgb}{0,0,1}
\newcommand{\qw}[1][-1]{\ar @{-} [0,#1]}
\newcommand{\qwx}[1][-1]{\ar @{-} [#1,0]}
\newcommand{\cw}[1][-1]{\ar @{=} [0,#1]}
\newcommand{\measureD}[1]{*{\xy*+=<0em,.1em>{#1}="e";"e"+UR+<0em,.25em>;"e"+UL+<-.5em,.25em> **\dir{-};"e"+DL+<-.5em,-.25em> **\dir{-};"e"+DR+<0em,-.25em> **\dir{-};{"e"+UR+<0em,.25em>\ellipse^{}};"e"+C:,+(0,1)*{} \endxy} \qw}
\newcommand{\control}{*!<0em,.025em>-=-<.2em>{\bullet}}
\newcommand{\ctrl}[1]{\control \qwx[#1] \qw}
\newcommand{\multigate}[2]{*+<1em,.9em>{\hphantom{#2}} \POS [0,0]="i",[0,0].[#1,0]="e",!C *{#2},"e"+UR;"e"+UL **\dir{-};"e"+DL **\dir{-};"e"+DR **\dir{-};"e"+UR **\dir{-},"i" \qw}
\newcommand{\ghost}[1]{*+<1em,.9em>{\hphantom{#1}} \qw}
\newcommand{\gategroup}[6]{\POS"#1,#2"."#3,#2"."#1,#4"."#3,#4"!C*+<#5>\frm{#6}}
\newcommand{\rstick}[1]{*!L!<-.5em,0em>=<0em>{#1}}
\newcommand{\lstick}[1]{*!R!<.5em,0em>=<0em>{#1}}
\newcommand{\Qcircuit}{\xymatrix @*=<0em>}
\begin{document}

	\title{Non-Unitary Quantum Computation in the Ground Space of Local Hamiltonians}
	\author{Na\"iri Usher}
	\email{ucapnus@ucl.ac.uk}
	\affiliation{Department of Physics and Astronomy, University College London, Gower Street, London WC1E 6BT, United Kingdom.}
	\author{Matty J. Hoban}
	\affiliation{University of Oxford, Department of Computer Science, Wolfson Building, Parks Road, Oxford OX1 3QD, United Kingdom.}
	\affiliation{School of Informatics, University of Edinburgh, 10 Crichton Street, Edinburgh EH8 9AB, UK}
	\author{ Dan E. Browne}
	\affiliation{Department of Physics and Astronomy, University College London, Gower Street, London WC1E 6BT, United Kingdom.}
	\begin{abstract}
A central result in the study of Quantum Hamiltonian Complexity is that the $k$-\textsc{local hamiltonian} problem is $\mathsf{QMA}$-complete \cite{KitaevBook}. In that problem, we must decide if the lowest eigenvalue of a Hamiltonian is bounded below some value, or above another, promised one of these is true. Given the ground state of the Hamiltonian, a quantum computer can determine this question, even if the ground state itself may not be efficiently quantum preparable. Kitaev's proof of $\mathsf{QMA}$-completeness encodes a unitary quantum circuit in $\mathsf{QMA}$ into the ground space of a Hamiltonian. However, we now have quantum computing models based on measurement instead of unitary evolution, furthermore we can use post-selected measurement as an additional computational tool. In this work, we generalise Kitaev's construction to allow for non-unitary evolution including post-selection. Furthermore, we consider a type of post-selection under which the construction is consistent, which we call tame post-selection. We consider the computational complexity consequences of this construction and then consider how the probability of an event upon which we are post-selecting affects the gap between the ground state energy and the energy of the first excited state of its corresponding Hamiltonian. We provide numerical evidence that the two are not immediately related, by giving a family of circuits where the probability of an event upon which we post-select is exponentially small, but the gap in the energy levels of the Hamiltonian decreases as a polynomial.
	\end{abstract}
	\pacs{ 03.67.-a, }
\maketitle

\section{Introduction}
The advent of quantum information has brought the fields of theoretical computer science and physics closer together in new and exciting ways. In particular, it has been shown that key problems in condensed matter theory, such as finding the ground state of a class of Hamiltonians, can be studied through the lens of quantum computational complexity \cite{arora2009computational}.

Kitaev, building upon ideas originally developed by Feynman \cite{F85qmc}, was the first to connect the two together. He showed that determining whether a system has ground state energy $E_0$ that is less than $a$ or greater than $b$ is a hard problem for a quantum computer. However, once given a candidate ground state, a quantum computer can efficiently check its energy. More formally, Kitaev thus defined the $k$-\textsc{Local Hamiltonian} problem which he subsequently proved is $\mathsf{QMA}$-complete, the quantum computing analogue of the class $\mathsf{NP}$ \cite{KitaevBook, kempe2006complexity}. Intuitively, $\mathsf{NP}$-complete problems are hard to solve on a classical computer, even though candidate solutions --- or proofs --- can be efficiently checked, and thus similarly, $\mathsf{QMA}$-complete problems will be hard for a quantum computer to solve. Since then, building on Kitaev's seminal work, the field of Quantum Hamiltonian Complexity has flourished \cite{kempe20033, bravyi2003commutative, bravyi2006complexity, cubitt2014complexity, breuckmann2014space, gottesman2009quantum}. 

At the heart of the proof of $\mathsf{QMA}$-completeness of the $k$-\textsc{Local Hamiltonian} lies the circuit-to-Hamiltonian construction, which maps the unitary evolution of quantum states described by a quantum circuit to the ground states of a Hamiltonian operator. This procedure thus effectively encodes the computation within the Hamiltonian's ground-space, or kernel when thought of as a linear operator. The key idea is that accepting computations will have low energy eigenstates (below $a$), whereas rejecting ones will not (above $b$). The gap $a-b$ in energies is taken to be lower-bounded by an inverse polynomial in the size of the problem input. The associated $\mathsf{QMA}$-hard problem is thus to approximate the ground state energy to polynomial accuracy. 

Since Kitaev's work, the framework and models of quantum computation have evolved beyond the unitary quantum circuit model, with many of these models being motivated by ideas in physics. Such  examples include the Knill-Laflamme-Milburn (KLM) \cite{knill2001scheme} scheme for universal quantum computation with linear optics, and Measurement-based Quantum Computing (MBQC) \cite{raussendorf2001one, briegel2009measurement}, whereby universal quantum computation is achieved by a sequence of (single-qubit) measurements made on an entangled resource state. Crucially, the circuit model and MBQC can simulate one another, and are of equivalent computational power. 

The tool of postselection was introduced by Aaronson \cite{aaronson2005quantum}, who showed that given the ability to post-select, quantum circuits could solve $\mathsf{PP}$-complete problems. $\mathsf{PP}$ is a powerful complexity class, containing both $\mathsf{NP}$ and $\mathsf{QMA}$ \cite{aaronson2005quantum, marriott2005quantum, vyalyi2003qma}. Postselection similarly boosts the power of classical computation, although it is interesting to note that quantum computing taken in conjunction with postselection is more powerful than with just classical computation. 

The probability of the event upon which we post-select can be exponentially small in the input size. Indeed, otherwise the post-selected computation could be simulated by running the computation a polynomial number of times. Yet, at the same time, post-selection is a useful tool in quantum computational complexity. For example, there is now a growing body of evidence showing that sampling from the distributions produced by restricted, non-universal, quantum circuits is hard for a classical computer. By adding post-selection to our computational toolbox, sampling problems such as \textsc{BosonSampling} \cite{aaronson2011computational} and \textsc{IQPSampling} \cite{bremner2010classical, hoban2014measurement} can be shown to be hard to efficiently classically simulate. 

In the following, we investigate the construction of natural versions of the $k$-\textsc{local hamiltonian} problem from non-unitary quantum computation, by considering evolution via measurements, as seen in MBQC. Going further, we introduce a family of Hamiltonians encoding evolution via renormalised measurements, which allows us to add postselection to these non-unitary circuits. This in turn motivates our study of a restricted form of postselection, which we call tame, that limits the computational complexity of such families. Finally, we consider \textit{tame post-selection gadgets}, such as the Hadamard gadget from \textsc{IQPSampling} \cite{bremner2010classical}, and study its associated Hamiltonian. By considering two different yet computationally equivalent circuits, we numerically study the scaling of the gap $b-a$ with input size and find two radically different behaviours. In one case, the gap scales as an inverse exponential while in the other, it scales as an inverse polynomial.  This suggests that the probability of the post-selected event succeeding is not immediately connected to the gap in the Hamiltonian, which makes the connection between the $k$-\textsc{local hamiltonian} problem and non-unitary computation very subtle. It also leads to the possibility of studying post-selected quantum computation that can be encoded in physical systems that have a ``well-behaved " spectral gap. 

The paper proceeds as follows: in Sec.\ref{sec2}, we review the $k$-\textsc{local hamiltonian} problem and its associated computational complexity class $\mathsf{QMA}$. In particular, we highlight the role of the circuit-to-Hamiltonian construction in Kitaev's original proof that $k$-\textsc{local hamiltonian} is $\mathsf{QMA}$-complete. In Sec. \ref{sec3}

we introduce the formalism for constructing versions of the $k$-\textsc{local hamiltonian} problem from quantum circuits with post-selection. 

Given this construction, we then consider what we call tame post-selection as motivated by the study of MBQC in Sec. \ref{sec4} and show that the computational complexity of a version of $\mathsf{QMA}$ where the circuits include tame post-selection is only as powerful post-selected quantum computing alone. We then discuss the connection between these forms of computations and forms of the $k$-\textsc{local hamiltonian} problem with an exponentially small gap $a-b$. We then go on to give numerical evidence that certain post-selected circuits (where we post-select on exponentially unlikely events) can still give rise to Hamiltonians with gaps that are inverse polynomially bounded. 
 
 Finally, in Sec.\ref{sec5} we end with some discussion about future directions of research.

\section{The k-Local Hamiltonian Problem}\label{sec2}
The dynamics of many-body quantum systems are described by a Hamiltonian operator. Typical Hamiltonians studied within condensed matter physics are described as a sum of Hermitian operators that act on a number $k$ of subsystems, which are then said to be $k$-local. Despite this simplicity, the matrix representation of a Hamiltonian operator acting on $n$ qubits is of dimension exponential in $n$. Therefore, computing its properties, such as the ground state or ground state energy by brute force diagonalisation can be hard. Although it may seem natural to use quantum computers to compute such properties of Hamiltonians, Quantum Hamiltonian Complexity tells us that this would be still be a hard task. In order to further understand this we must formalise the problem of interest and its associated complexity class.

\begin{definition}
The \textsc{$k$-Local Hamiltonian} problem: given a $k$-local Hermitian operator $H=\sum_{i=1}^{r(n)}H_i$, $||H_i|| \leq t(n)$ and two real numbers $0 \leq a \leq b$ such that $b-a > \frac{1}{s(n)}$, and $r(n)$, $s(n)$, and $t(n)$ are polynomials, determine if $\lambda_{min}(H)<a$ or $\lambda_{min}(H) > b$, given the promise that one of these is the case and where $\lambda_{min}(H)$ denotes the smallest eigenvalue of the operator $H$. 
\end{definition}

This problem is $\mathsf{QMA}$-complete. That is, it is in the complexity class $\mathsf{QMA}$ and every problem in $\mathsf{QMA}$ may be reduced to it. For the sake of completeness, we present the definition of this complexity class. A computation begins with a classical input $x$ od size $n$ encoded in binary representation, and ends with a single bit as output. In general, we wish to determine whether the input is a $yes$-instance (output bit is $1$) or a $no$-instance (output bit is $0$). That is, whether it belongs to the set of strings $\mathcal{L}_{yes}$ whose output is accepting or to the set of rejecting outputs  $\mathcal{L}_{no}$, promised that it does indeed belong to one of these two sets.  

The class $\mathsf{QMA}$ stands for \textit{Quantum Merlin Arthur} where we imagine that Arthur, who has limited computational power wishes to determine whether a given input $x$ belongs to a language. To do so, he not only has access to a quantum computer, but also to a quantum state, the alleged proof $|\psi\rangle$. This proof state is given to him by Merlin, a computationally unbounded agent. Effectively, this is the quantum probabilistic analogue of the class $\mathsf{NP}$, whose $yes$-instances have polynomial size proofs.  

We need to be more precise about the quantum computation that Arthur does on the quantum state $|\psi\rangle$. First, there are $v(n)$ qubits in the quantum state $|\psi\rangle$, and the quantum computer is a quantum circuit with $w(n)$ quantum gates acting on $y(n)+v(n)$ qubits, where $v(n)$, $w(n)$, and $y(n)$ are polynomial functions in $n$. The description of the quantum gates and the specification of their sequence is efficiently generated by a classical computer in time at most polynomial in $n$, thus giving a so-called \textit{uniform family of quantum circuits} $\{V_{x}\}$ that have a description which depends on the input $x$. The circuits take $|\psi\rangle|00...0\rangle$ as an input quantum state, where $|00...0\rangle$ is the state of $y(n)$ qubits initialised in the state $|0\rangle$. Then after all of the $w(n)$ gates have been applied, there is a measurement in the computational basis $\{|0\rangle,|1\rangle\}$ made on the first qubit to decide the classical output of the computation: the outcome of this measurement is represented by a bit $q_{out}\in\{0,1\}$. This is typically a probabilistic process, and so we allow an error probability $\epsilon$ for Arthur's computer to output the wrong answer. We now have the ingredients for defining $\mathsf{QMA}$.

\begin{definition}
A promise problem $\mathcal{L}=(\mathcal{L}_{yes},\mathcal{L}_{no})$ is in $\mathsf{QMA}$ if for an input $x\in\{0,1\}^{n}$, there exists a uniform family of quantum circuits $\{V_{x}\}$ taking $|\psi\rangle|00...0\rangle$ as input, and with bit-value $q_{out}\in\{0,1\}$ as an outcome of a measurement on the first qubit in the basis $\{|q_{out}\rangle\}$, such that
\begin{align*}
&x  \in \mathcal{L}_{yes} \quad \textrm{if} \quad \exists |\psi\rangle, \quad \textrm{such that} \quad \text{P}[q_{out}=1] \geq \alpha , \\
&x   \in \mathcal{L}_{no} \quad \textrm{if} \quad \forall |\psi \rangle, \quad \text{P}[q_{out}=1] \leq \beta,
\end{align*}
such that $\alpha-\beta\geq 1/poly(n)$, where $poly$ is a polynomial.
\end{definition}

It should be noted that $\mathsf{NP}$ is contained in $\mathsf{QMA}$. In the next subsection we will give an overview of the proof that \textsc{$k$-Local Hamiltonian} is $\mathsf{QMA}$-complete. The central idea is to build a Hamiltonian whose ground state encodes the computation performed in a $\mathsf{QMA}$ computation such that the energy of the Hamiltonian is bounded below $a$ if and only if $x  \in \mathcal{L}_{yes}$, and above $b$ if and only if $x   \in \mathcal{L}_{no}$.

\subsection{$\mathsf{QMA}$-completeness of the \textsc{$k$-Local Hamiltonian} Problem}
The proof that the \textsc{$k$-Local Hamiltonian} problem is $\mathsf{QMA}$-complete proceeds in two parts: first, it is shown to be in $\mathsf{QMA}$ and then it is shown to be  $\mathsf{QMA}$-hard. The first part relies on effectively sampling the Hamiltonians energy given copies of the ground state. If there exist states with energy below $a$, then a quantum computer will be able to check this is correct, provided it is given an efficient description of both the ground state as a proof state. On the other hand, if no low energy state exists, then Merlin cannot send any state to convince us that the ground state energy is near-zero. Here, we shall focus on the hardness proof which relies on the circuit-to-Hamiltonian construction, whereby Arthur's quantum computation is translated into a Hamiltonian such that its lowest-eigenvalue eigenstate describes the evolution of a quantum state during that computation.

Feynman had the original insight that the discrete time evolution of a quantum system can be encoded in a Hamiltonian by constructing an operator whose kernel contains each evolution state \cite{F85qmc}. The system is assumed to be in an initial state $|\psi_0\rangle$ at time step $0$, and evolves, via a sequence of intermediate states $|\psi_i\rangle$, to a final state $|\psi_L\rangle$ at time step $L$. 

Now, in order to track the discrete time evolution of the system, a clock register is appended. For example, this could be a pointer particle moving to the left or to the right of a one-dimensional lattice. Here, a hop to a site to the right corresponds to a clock transition from time step $t$ to $t+1$. As in quantum computation operations are reversible, the particle is also allowed to move to the left. Thus $L$ time steps, require $L+1$ qubits, although we note that there exist more efficient clock constructions \cite{kempe20033}.

The \textit{history state} $|\eta \rangle$ corresponds to the equal superposition over all correct evolution states: 
\begin{equation*}
|\eta \rangle = \frac{1}{\sqrt{L+1}}\sum_{i=0}^L |\psi_i\rangle_{sys}  \otimes |i\rangle_{clock}, 
\end{equation*}
where suffices indicate the quantum system and clock register. The state $|\eta \rangle$ should be contained within the kernel of the constructed Hamiltonian, which consist of operators acting on the Hilbert spaces associated with both the system and the clock as indicated by the above history state.Thus, the ground state describes the history of the computation acting on the input state $|\psi_{0}\rangle_{sys}$.

First, the unitary $V_{x}$ enacted by Arthur in a $\mathsf{QMA}$ computation is decomposed into a polynomial sequence of single and two-qubit gates $V_{x}=U_L \ldots U_1$, which are picked from a universal gate set. Thus, we may generically consider the gates to be applied sequentially, one after the other, the gate $U_j$ being applied after $j$ time steps, resulting in the input state $|\psi\rangle|00...0\rangle$ having evolved to $U_j \ldots U_1 |\psi\rangle|00...0\rangle$. The history state of the computation can thus be expressed as:
\begin{equation*}
|\eta \rangle = \frac{1}{\sqrt{L+1}} \sum_{j=1}^L U_j \ldots U_1 |\psi\rangle|00...0\rangle \otimes |j\rangle.
\end{equation*}

The next step is to construct a Hamiltonian operator $H$ such that the history state lies within its kernel. This will be made up of three Hamiltonians: an input, a propagation and an output Hamiltonian. The role of the input Hamiltonian $H_{in}$ is to verify that the ancillary qubits Arthur has access to are correctly initialised to the state $|00...0\rangle$. This is attained by having an operator that projects onto the subspace orthogonal to $|00...0\rangle$ but projects onto the clock state being $|0\rangle$. Next, the propagation Hamiltonian ensures the correct unitary operators are applied at each time step. It is defined as the sum of the individual propagation Hamiltonian terms $H_{prop}=\sum_{j=1}^L H_j$,  where $H_j$ contains the evolution from time step $j-1$ to $j$. In this time-step, a state $|\psi \rangle $ evolves to a new state as the result of a unitary operator $U_j$ being applied, giving a component of the history state proportional to:
\begin{equation*}
|\psi \rangle \otimes |j-1 \rangle + U_j |\psi \rangle \otimes |j\rangle, 
\end{equation*}
which can easily be verified to lie within the kernel of: 
\begin{equation}\label{eq1}
H_j= \frac{1}{2}(-U_j \otimes  |j\rangle \langle j-1|  -U_j^\dag \otimes  |j-1\rangle \langle j|  \notag
 + \mathbb{I} \otimes  |j\rangle \langle j| +\mathbb{I} \otimes  |j-1\rangle \langle j-1|).
\end{equation}

Finally, at time-step $L$, the output qubit $q_{out}$ is measured in the computational basis, which in the case of an accepting computation  yields the outcome $|1\rangle$. In this case, the resulting output state resides in the nullspace of $H_{out}=|0\rangle\langle0|_{q_{out}} \otimes |L\rangle \langle L|$, which acts on the output qubit and clock system and applies identity to all other systems.

The task at hand is to now compute the smallest eigenvalue of the Hamiltonian $H=H_{in} + H_{prop}+ H_{out}$. By defining the change of basis operator $W=\sum_{j=1}^L U_j \ldots U_1 \otimes |j \rangle \langle j |$ and the state $|\phi \rangle =\frac{1}{\sqrt{L+1}}\sum_{i=1}^L |i\rangle$, the history state can be expressed as $|\eta \rangle = W \left(|\psi\rangle|00...0\rangle \otimes |\phi \rangle\right)$. Thus, we can now consider  $ |\psi\rangle|00...0\rangle \otimes |\phi \rangle $ to be in the kernel of the operator $W^\dag H W$, a simpler operator which nonetheless conserves the spectrum. 

From here, the task it to show that if $x\in\mathcal{L}_{yes}$, then the minimum eigenvalue of $H$ is less than $a$, whereas if $x\in\mathcal{L}_{no}$, then it is greater than $b$, where $b-a \geq \frac{1}{\text{poly}(n)}$, see \cite{KitaevBook}. This ends our summary of some of the key ideas used in the proof of $\mathsf{QMA}$-completeness of the \textsc{$k$-Local Hamiltonian} problem. The core idea was to see how the construction of the Hamiltonian $H$ relates directly to deciding a problem in $\mathsf{QMA}$. In the next section we will generalise this construction, in particular by building a propagation Hamiltonian that allow for non-unitary evolutions, leaving the other terms in the Hamiltonian essentially unchanged.

\section{Local Hamiltonians from Post-selected Quantum Circuits}\label{sec3}

Measurements are a key component in quantum computation, and even more so in MBQC whereby they drive the computation. As both the circuit model and MBQC are equivalent in terms of computational power, the complexity class $\mathsf{QMA}$ could equally be defined as an MBQC where part of the resource state is prepared by Merlin. However, in the proof of \textsc{$k$-Local Hamiltonian} being $\mathsf{QMA}$-complete, the Hamiltonian construction is made with respect to the quantum circuit model, with a single measurement performed at the end of the computation. Thus, we now consider how non-unitary evolution due to projective measurements can be encoded into Hamiltonians.

\subsection{Evolution via Renormalised Projection}

We now consider the process whereby a projective measurement is applied to a pure state, yielding an outcome $m$. We thus now imagine the evolution of a state  $|\psi \rangle$ at time $t$ to a new state $|\psi'\rangle= L|\psi \rangle$ at time $t+1$, where the operator $L$ is proportional to a projector, that is $L=\Pi/\sqrt{\langle\psi|\Pi|\psi\rangle}$  and where $\Pi$ is a projector.

 Previously, we considered the unitary time evolution of a system from $t$ to $t+1$ and constructed the associated history state of the system and the clock register. We now follow the same approach in order to obtain a new history state, this time corresponding to an evolution obtained via measurement: 
\begin{equation}
|\eta\rangle=|\psi \rangle \otimes |t\rangle + L|\psi \rangle \otimes |t+1\rangle. 
\end{equation}
In the next result, we now spell out a Hamiltonian $H_{t}$ in which this history state $|\eta\rangle$ lives. 

\begin{proposition}\label{prop1}
	Given a projective measurement $\{\Pi, \mathbb{I}-\Pi\}$  at time-step $t$, for the measurement outcome on state $|\psi\rangle$ corresponding to projector $\Pi$ occurring with probability $p= \langle \psi |\Pi|\psi \rangle $ independent of input state $|\psi \rangle $, then the un-normalised history state $|\eta\rangle=|\psi \rangle \otimes |t\rangle + L|\psi \rangle \otimes |t+1\rangle$ lies in the kernel of
	\begin{equation}
	H_t=\frac{p}{p+1}\Big[ L \otimes \Big(\frac{1}{\sqrt{p}} |t\rangle \langle t| -  |t\rangle \langle t+1|-|t+1\rangle \langle t| 
	+ \sqrt{p} |t+1\rangle \langle t+1|  \Big)\Big]+\left(\mathbb{I}-\Pi\right) \otimes |t+1\rangle \langle t+1|,
	\end{equation}
	for $L=p^{-\frac{1}{2}}\Pi$.
\end{proposition}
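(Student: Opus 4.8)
The plan is to verify the kernel condition by direct substitution, establishing $H_t|\eta\rangle = 0$. I would first split $H_t$ into its two summands, writing $H_t = A + B$ with $A = \frac{p}{p+1}\, L \otimes C$, where $C = \frac{1}{\sqrt p}|t\rangle\langle t| - |t\rangle\langle t+1| - |t+1\rangle\langle t| + \sqrt p\,|t+1\rangle\langle t+1|$ is the clock operator, and $B = (\mathbb{I}-\Pi)\otimes|t+1\rangle\langle t+1|$. The two algebraic facts that drive the whole argument are the idempotency $\Pi^2 = \Pi$ of the projector and its immediate consequence for $L = p^{-1/2}\Pi$, namely $L^2 = p^{-1}\Pi = p^{-1/2}L$. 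I would then show that each summand annihilates $|\eta\rangle$ separately.

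For $B$ the clock factor $|t+1\rangle\langle t+1|$ kills the $|t\rangle$ component of $|\eta\rangle$, leaving $B|\eta\rangle = (\mathbb{I}-\Pi)L|\psi\rangle \otimes |t+1\rangle$. Since $(\mathbb{I}-\Pi)L = p^{-1/2}(\Pi - \Pi^2) = 0$, this term vanishes outright. This is precisely the step where the renormalised-projection structure $L \propto \Pi$ is used: the post-measurement state $L|\psi\rangle$ lies in the range of $\Pi$, hence in the kernel of $\mathbb{I}-\Pi$.

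For $A$ the cleanest route I would take is to observe that $C$, written as a matrix in the $\{|t\rangle,|t+1\rangle\}$ basis, has vanishing determinant and is therefore rank one; explicitly $C = |u\rangle\langle u|$ with $|u\rangle = p^{-1/4}|t\rangle - p^{1/4}|t+1\rangle$. Applying the clock functional $\langle u|$ to $|\eta\rangle$ collapses it to the system vector $p^{-1/4}|\psi\rangle - p^{1/4}L|\psi\rangle$, so that $A|\eta\rangle = \frac{p}{p+1}\, L\big(p^{-1/4}|\psi\rangle - p^{1/4}L|\psi\rangle\big)\otimes|u\rangle$. Using $L^2 = p^{-1/2}L$, the image of the bracketed vector under $L$ is $p^{-1/4}L|\psi\rangle - p^{1/4}L^2|\psi\rangle = p^{-1/4}L|\psi\rangle - p^{-1/4}L|\psi\rangle = 0$, so $A|\eta\rangle = 0$ as well. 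Equivalently, one can expand $A|\eta\rangle$ directly and collect the coefficients of $|t\rangle$ and of $|t+1\rangle$, each of which cancels by the same identity. Combining the two parts gives $H_t|\eta\rangle = A|\eta\rangle + B|\eta\rangle = 0$.

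There is no deep obstacle here: the content is a short computation, and the only place demanding genuine care is the bookkeeping of the fractional powers of $p$, where the precise coefficients $1/\sqrt p$ and $\sqrt p$ in $C$ are exactly what is needed to make $C$ rank one and to align the two contributions after $L$ is applied. I would also flag explicitly where the hypotheses are consumed: the identity $p = \langle\psi|\Pi|\psi\rangle$ guarantees that $L|\psi\rangle$ is correctly normalised, so that $|\eta\rangle$ is the genuine history state of a renormalised measurement, while its independence from $|\psi\rangle$ ensures $H_t$ is a single fixed operator rather than one tailored to a particular input. The kernel computation itself, by contrast, relies only on $\Pi^2 = \Pi$.
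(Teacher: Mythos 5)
Your proof is correct, but it runs in the opposite direction to the paper's. The paper \emph{derives} $H_t$ rather than verifying it: it writes a general Hermitian ansatz $H_t = N\sum_{i,j} H_{ij}\otimes|t+i-1\rangle\langle t+j-1|$ on the two-dimensional clock space, imposes $H_t|\eta\rangle=0$ to obtain a linear system for the blocks $H_{11}, H_{12}, H_{22}$, selects the ``natural solution'' $H_{11}=\tfrac{1}{p}\Pi$, $H_{12}=-\tfrac{1}{\sqrt p}\Pi$, $H_{22}=\Pi$, fixes the constant $N(p)=p/(p+1)$ by demanding the first summand be a projector, and only then appends $(\mathbb{I}-\Pi)\otimes|t+1\rangle\langle t+1|$ to remove unwanted states of the form $(\mathbb{I}-\Pi)|\psi\rangle\otimes|t'\rangle$ from the kernel. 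You instead take the stated operator as given and check $H_t|\eta\rangle=0$ by splitting $H_t=A+B$, using $(\mathbb{I}-\Pi)L=0$ for $B$ and the rank-one factorisation $C=|u\rangle\langle u|$ with $|u\rangle=p^{-1/4}|t\rangle-p^{1/4}|t+1\rangle$ together with $L^2=p^{-1/2}L$ for $A$. Each approach has its merits: the paper's derivation explains where the operator comes from and why the extra $(\mathbb{I}-\Pi)$ term is needed (kernel trimming), which matters for the subsequent construction; your verification is shorter, exposes the tidy rank-one structure of the clock part, and is actually the logically complete proof of the literal statement --- notably, the paper never explicitly re-checks that adding the $(\mathbb{I}-\Pi)$ term preserves $H_t|\eta\rangle=0$, which is precisely your $B|\eta\rangle=0$ step. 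One small caveat: the paper's treatment of the kernel's \emph{other} contents (what else lies in it and why the extra term excludes it) is part of what the proposition is used for later, and your proof deliberately does not address that; as a proof of the stated claim this is fine, but it buys correspondingly less.
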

\begin{proof}
Since the span of the clock states $\{|t\rangle,|t+1\rangle\}$ is a two dimensional Hilbert space, the operator $H_{t}$ can be decomposed as 
\begin{equation}
H_t=N\big( H_{11} \otimes |t\rangle \langle t| +H_{12} \otimes |t\rangle \langle t+1|  
+H_{21} \otimes |t+1\rangle \langle t| +H_{22} \otimes |t+1\rangle \langle t+1| \big),
\end{equation}
where $N$ is a normalisation constant and $H_{ij}$ is an operator acting on the system with initial state $|\psi\rangle$. The requirement for the history state  $|\eta \rangle$ to satisfy $H|\eta \rangle =0$ leads to
\begin{equation}
N(H_{11} |\psi \rangle +H_{12}L|\psi \rangle \otimes |t \rangle  \notag+ H_{21} |\psi \rangle \otimes |t +1 \rangle   
 + H_{22}L|\psi \rangle \otimes |t +1 \rangle ) =0.
\end{equation}
As the operator $H_{t}$ is constrained to be Hermitian, we have that $H_{21}=H_{12}^\dag$. This produces the following system of equations: 
\begin{equation*}
\begin{cases}
H_{11} |\psi \rangle +H_{12}L|\psi \rangle =0, \\
H_{12}^\dag |\psi \rangle + H_{22}L|\psi \rangle  =0.
\end{cases}
\end{equation*}
A natural solution to the above set of equations is given by:  $H_{11}=\frac{1}{p} \Pi$, $H_{12}=-\frac{1}{\sqrt{p}}\Pi$, and $H_{22}=\Pi$, thus yielding the operator
	\begin{equation}
	H_t=N\Pi \otimes \Big(\frac{1}{p} |t\rangle \langle t| - \frac{1}{\sqrt{p}} |t\rangle \langle t+1| 
	-\frac{1}{\sqrt{p}}|t+1\rangle \langle t| +  |t+1\rangle \langle t+1|  \Big).
	\end{equation}
We now, without loss of generality, constrain the operator $H_{t}$ to be a projector, i.e. $H_{t}^2=H_{t}$. But, we note that we now have that $H^2= cH$, where:
	\begin{equation}
	H_t^2 = N^2 \Big(1 +\frac{1}{p}\Big) \Pi \otimes \Big(\frac{1}{p} |t\rangle \langle t| - \frac{1}{\sqrt{p}} |t\rangle \langle t+1| 
	-\frac{1}{\sqrt{p}}|t+1\rangle \langle t| +  |t+1\rangle \langle t+1|  \Big), 
	\end{equation}
and where $c=N(1 + p^{-1})$. This thus leads to a normalisation constant $N$ which depends on the outcome probability $p$, i.e. $N(p)=p/(p+1)$. 

By construction, the Hamiltonian $H_{t}$ contains the history state $|\eta \rangle$ in its kernel. But, the question is now whether the kernel contains any other states. And indeed, due to the orthogonality of the projectors in the measurement, there will be states lying in an orthogonal sub-space of the projector $\Pi$, that is of the form $(\mathbb{I}-\Pi) |\psi \rangle \otimes |t'\rangle$, in particular for $t'=t$ and $t'=t+1$. In order to exclude these states from the kernel, we add the following term to the Hamiltonian $H_t$ $(\mathbb{I}-\Pi)^\perp \otimes |t+1\rangle \langle t+1|$ to $H_{t}$, and we thus finally obtain: 
	\begin{equation}
	H_t= N(p)\Big[ \frac{1}{\sqrt{p}}\Pi \otimes \Big(\frac{1}{\sqrt{p}} |t\rangle \langle t| -  |t\rangle \langle t+1| -|t+1\rangle \langle t|  
	+ \sqrt{p} |t+1\rangle \langle t+1|  \Big)\Big] +(\mathbb{I}-\Pi) \otimes |t+1\rangle \langle t+1|,
	\end{equation}
for $N(p)=p/(p+1)$. This concludes the proof.
\end{proof}

We can now consider a circuit of $T$ time-steps where at each time-step either a unitary or a renormalised projector is sequentially applied to the input state $|\psi\rangle|00...0\rangle$. That is, the computation evolves in layers of unitary evolution and measurements. For example, after $T$ time-steps of a circuit in which a unitary $U_{i}$ is alternated with a renormalised projector $L_{j}$, the state of the system would be $|\psi_{T}\rangle=U_{T}L_{T-1}...L_{2}U_{1}|\psi\rangle$. The history state $|\eta\rangle$ is then:
\begin{equation*}
|\eta\rangle=\frac{1}{T+1}\Big(|\psi\rangle\otimes|0\rangle+...+U_{T}L_{T-1}...L_{2}U_{1}|\psi\rangle\otimes|T\rangle\Big),
\end{equation*}
which is in the kernel of the Hamiltonian $H=H_{in}+H_{prop}+H_{out}$ with $H_{in}$ and $H_{out}$ as defined before and now
\begin{equation}
H_{prop}=\sum_{i=0}^{T/2}H_{2i}^{unitary}+\sum_{j=1}^{T/2}H_{2j-1}^{post},
\end{equation}
with $H_{i}^{unitary}$ and $H_{j}^{post}$ being operators of the form in Eq.\eqref{eq1} and in Prop.\ref{prop1} respectively. Therefore, evolution involving measurements can be encoded into a Hamiltonian in a natural extension of the Kitaev-Feynman approach.

Of course the above evolution corresponds to conditioning on a particular outcome occurring, i.e. post-selection. Here, post-selection is modelled by the renormalised projector, but crucially relies on the quantum state to dictate the norm of this renormalised projector. Indeed this is one of the major modifications to the $k$-\textsc{Local Hamiltonian} problem when we consider post-selection. Here, the operator norm of the individual evolution terms $H_{j}^{post}$ in the Hamiltonian may not be bounded by a polynomial in the input size to the problem. For indeed, if the probability $p$ of a particular event happening is exponentially small then the operator norm will be upper-bounded by an exponential. 

\subsection{Tame Post-selection}

Clearly given evolution involving general post-selection, to construct the above Hamiltonian with the history state in its kernel we will need to know the initial quantum state $|\psi\rangle$ to calculate the renormalised projectors. However, a crucial aspect of the $k$-\textsc{Local Hamiltonian} problem is that it is defined independently of its ground state. Therefore, to get around these issues we study the concept of \textit{tame post-selection}. Here, the initial state $|\psi\rangle|00...0\rangle$ evolves to $U|\psi\rangle|00...0\rangle$ via unitary evolution, before a projective measurement $\{\Pi,\mathbb{I}-\Pi\}$ is applied to the system. We then post-select on the outcome associated with $\Pi$ occurring, and have that the probability of obtaining this outcome is independent of the initial state $|\psi\rangle$. Note that the probability of obtaining the outcome could be exponentially small in the size of $|\psi\rangle$. We emphasize that we only demand that the probability be independent of only $|\psi\rangle$; it could vary if we replace the state $|00...0\rangle$ with another (known) quantum state. 

\begin{definition}\label{tamepost}
Given a bipartite Hilbert space $\mathcal{H}=\mathcal{H}_{sys}\otimes\mathcal{H}_{anc}$ consisting of a system with space $\mathcal{H}_{sys}$, and an ancillary system with space $\mathcal{H}_{anc}$ (both with the same dimension), in an initial quantum state $\vert\phi\rangle=\vert\psi\rangle\vert0\rangle$ such that $\vert\psi\rangle\in\mathcal{H}_{sys}$ and $\vert 0\rangle\in\mathcal{H}_{anc}$, if a unitary $U$ is applied to $\vert\phi\rangle$ followed by a projective measurement $\{\Pi_{k}:=\vert k\rangle\langle k\vert\}_{k}$ with outcomes $\{k\}$ applied to the system $\mathcal{H}_{sys}$, then post-selection on a particular outcome $k'$ is \textbf{tame post-selection} if $p(k'):=\langle\psi|\langle 0|U^{\dagger}(\Pi_{k'}\otimes\mathbb{I}_{anc})U|\psi\rangle|0\rangle$ is the same for all $|\psi\rangle\in\mathcal{H}_{sys}$.
\end{definition}

An example of tame post-selection would be the \textit{Hadamard gadget}, which was used in Ref. \cite{bremner2010classical} to show the classical hardness of \textsc{IQPSampling}. This is a method of implementing a Hadamard gate via measurement and post-selection, as illustrated in Fig. $3$. Here, a qubit in an arbitrary state $|\psi \rangle$ is entangled with an ancilla (initialised in the fixed $|+\rangle$ state) via a controlled-$Z$ operator $|0\rangle\langle 0|\otimes\mathbb{I}+|1\rangle\langle 1|\otimes Z$, with $Z$ being the Pauli-$Z$ operator. Then, the first qubit is measured in the Pauli-$X$ basis $\{|+\rangle,|-\rangle\}$ and we post-select upon obtaining the outcome associated with state $|+\rangle$. This results in the state on the second qubit being $H|\psi \rangle$. Here, the probability of obtaining the measurement outcome is  $1/2$ for both outcomes, and thus is independent of the state $|\psi\rangle$. On the other hand, if we were to alter the ancilla to have another state other than $|+\rangle$, this probability could change. This helps us emphasize that tame post-selection is tame with respect to a particular `input' sub-system.
\begin{figure}[h]
	\centering
	\mbox{
		\Qcircuit @C=2em @R=2em {
			\lstick{|\psi \rangle }	& \ctrl{1} & \measureD{X} & \cw & \rstick{\ket{+}}\\
			\lstick{|+\rangle }	     & \ctrl{-1}& \qw & \qw & \rstick{H|\psi\rangle }
		} }
		\caption{\emph{Hadamard Gadget} A measurement of the Pauli-$X$ observable is performed on the first qubit. By post-selecting on obtaining the outcome $+1$ corresponding to eigenvector $|+\rangle$, a Hadamard gate is applied to the unmeasured qubit.} 
		\label{Hadamard Gadget.}
	\end{figure}
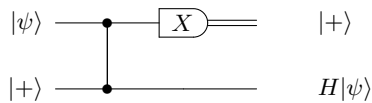
	
This post-selection results in a unitary operator being applied to the unmeasured system, a concept which is at the core of MBQC wherein unitary evolution is simulated by measurements. We generalise this kind of post-selection ``gadget'' in the following result.

\begin{proposition}
	\label{lemprob}
	Let $|\psi \rangle |E\rangle$ be a quantum state in a Hilbert space $\mathcal{H}=\mathcal{H}_{sys} \otimes \mathcal{H}_{env}$, where $|\psi \rangle\in\mathcal{H}_{sys}$ and $|E\rangle\in\mathcal{H}_{env}$, and the two Hilbert spaces have the same dimension. Suppose a unitary operator $U$ is applied to the joint system, followed by a projective measurement on $\mathcal{H}_{sys}$ in the orthonormal basis $\{ |e_k\rangle \}$, as illustrated in Fig. \ref{fig2}.  Let $p_m=\langle E|\langle\psi|\left(|e_m\rangle\langle e_{m}|\otimes\mathbb{I}\right)|\psi \rangle|E\rangle$ denote the probability of outcome $m$ occurring. Then, if $p_m$ is independent of the state $|\psi \rangle\in\mathcal{H}_{sys}$, the action of this process on the system is equivalent to applying $\sqrt{p_m} V_m$ to $|\psi\rangle$, where $V_m$ is a unitary operator. 
\end{proposition}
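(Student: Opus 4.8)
The plan is to isolate the single linear map that the post-selected measurement induces on the input register and to show that tameness forces it to be a scalar multiple of a unitary. First I would note that projecting the system onto the outcome $|e_m\rangle$ factorises the un-normalised post-measurement state: writing it as $(|e_m\rangle\langle e_m|\otimes\mathbb{I}_{env})U(|\psi\rangle\otimes|E\rangle)$, the system register is left in the \emph{fixed} state $|e_m\rangle$ while all dependence on $|\psi\rangle$ is pushed into the environment. This motivates defining a single linear operator
\[
A_m := (\langle e_m|\otimes\mathbb{I}_{env})\,U\,(\mathbb{I}_{sys}\otimes|E\rangle)\colon \mathcal{H}_{sys}\to\mathcal{H}_{env},
\]
so that the post-selected state is exactly $|e_m\rangle\otimes A_m|\psi\rangle$ and the success probability is $p_m=\langle\psi|A_m^{\dagger}A_m|\psi\rangle$.

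The core step is to convert the tameness hypothesis into an operator identity. Since $p_m=\langle\psi|A_m^{\dagger}A_m|\psi\rangle$ is assumed to take the same value for every normalised $|\psi\rangle\in\mathcal{H}_{sys}$, and $A_m^{\dagger}A_m$ is Hermitian (hence diagonalisable, with its eigenvalues realised as expectation values on its eigenvectors), constancy of this expectation over all unit vectors forces every eigenvalue to equal $p_m$; that is, $A_m^{\dagger}A_m=p_m\,\mathbb{I}_{sys}$. I expect this to be the main conceptual step, since it is the one place where the physical assumption of an input-independent success probability is turned into the algebraic statement that $A_m$ is proportional to an isometry; everything on either side of it is bookkeeping.

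From here the conclusion is immediate. Setting $V_m:=A_m/\sqrt{p_m}$ gives $V_m^{\dagger}V_m=\mathbb{I}_{sys}$, so $V_m$ is an isometry of $\mathcal{H}_{sys}$ into $\mathcal{H}_{env}$; because the two spaces are assumed to have equal dimension, such an isometry is automatically surjective and hence unitary. Thus $A_m=\sqrt{p_m}\,V_m$ with $V_m$ unitary, the post-selected state is $|e_m\rangle\otimes\sqrt{p_m}\,V_m|\psi\rangle$, and the effective action on the transferred system state is precisely $\sqrt{p_m}\,V_m|\psi\rangle$, as claimed. I would close by checking that this specialises correctly to the Hadamard gadget (there $p_m=1/2$ and the resulting $V_m$ is $H$ under the identification of the two qubits), and I would flag one caveat for the write-up: the expression for $p_m$ should carry the conjugation by $U$, namely $p_m=\langle E|\langle\psi|U^{\dagger}(|e_m\rangle\langle e_m|\otimes\mathbb{I})U|\psi\rangle|E\rangle$, since the measurement is performed after the unitary.
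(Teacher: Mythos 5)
Your proof is correct and takes essentially the same approach as the paper's: both isolate the effective operator that the post-selected process induces on the input register (the paper's $L_m=\langle e_{m}|U(\mathbb{I}\otimes|E\rangle\langle E|)$, your $A_m$), convert input-independence of $p_m$ into the operator identity $A_m^{\dagger}A_m=p_m\mathbb{I}$, and conclude $A_m=\sqrt{p_m}\,V_m$ with $V_m$ unitary. If anything, you make explicit two steps the paper merely asserts --- the Hermitian-eigenvalue argument forcing $A_m^{\dagger}A_m=p_m\mathbb{I}$ and the equal-dimension argument upgrading the isometry to a unitary --- and your closing caveat is well taken: the statement's expression for $p_m$ should indeed carry the conjugation by $U$, i.e.\ $p_m=\langle E|\langle\psi|U^{\dagger}\left(|e_m\rangle\langle e_m|\otimes\mathbb{I}\right)U|\psi\rangle|E\rangle$.
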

		\begin{proof}
The system is prepared in the state $|\psi \rangle \in \mathcal{H}_{sys}$ and is initially uncorrelated with the environment $|E \rangle \in \mathcal{H}_{env}$.  
				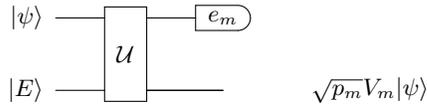
\begin{figure}[h!]
					\mbox{
							\Qcircuit @C=2em @R=2em {
								\lstick{\ket{\psi}}	& \multigate{1}{\mathcal{U}}	 &  \measureD{e_m} &  \\
								\lstick{\ket{E}}   	 &\ghost{\mathcal{U}}                & \qw  \qw &   \rstick{\sqrt{p_m}V_m |\psi \rangle }
							} }
						\caption{A unitary operator is applied to an unknown state $|\psi \rangle$ and an ancilla, before the system is measured. This results in sub-normalised unitary being applied to the ancilla.}\label{fig2}
					\end{figure}
A unitary operator $U$ acting on the joint state space  $ \mathcal{H}=\mathcal{H}_{sys} \otimes \mathcal{H}_{env}$ is applied, followed by a projective measurement of an orthonormal basis  $\{|e_k\rangle \}$ of the system, as shown in Fig. $1$. The resulting evolution (up to normalisation factors) is given by: 
			\begin{equation}
			\rho \otimes |e_0 \rangle \langle e_0| \to \Pi_{m}U\left(\rho \otimes |e_0 \rangle \langle e_0|\right)U^\dag \Pi_{m}, 
			\end{equation} 
			where $m$ denotes the obtained measurement outcome associated with projector $\Pi_{m}=|e_m\rangle \langle e_m|\otimes\mathbb{I}$, and $\rho=|\psi \rangle \langle \psi |$. We can consider the state $\rho'_{env}$ of the environment system with Hilbert space $\mathcal{H}_{env}$ after the application of $U$ and measurement $\Pi_m$, which is then the effective image of a map $\mathcal{K}_{m}$ on input state $\rho$, i.e. $\rho'=\mathcal{K}_{m}(\rho)=\textrm{tr}_{sys}\left(K_m |\psi \rangle \langle \psi| K_m^\dag\right)$ where the $K_m$ associated with outcome $m$ is defined as
\begin{equation*}			
			K_m=\Big((\Pi_{m}\otimes\mathbb{I})U\left(\mathbb{I} \otimes |E\rangle \langle E|\right)\Big).
\end{equation*}		
Therefore, upon tracing out the system in $\mathcal{H}_{sys}$, we have an effective map on the state $\rho$ of $L_m \rho L_m^\dag$ with $L_m=\langle e_{m}|U(\mathbb{I}\otimes|E\rangle \langle E|)$. Here, we demand that the outcome probability $p_{m}$ to be independent of the input state, i.e. $p_m= \langle \psi | L_m^\dag L_m | \psi \rangle$,  $\forall\textrm{ } |\psi \rangle$. The only way this is satisfiable for all input states is if $L_m^\dag L_m=p_m \mathbb{I}$. This, in turn, means that $L_m$ must be proportional to a unitary operator $V_m$, such that $L_m =\sqrt{p_m} V_m $. 
		\end{proof}

Clearly the Hadamard gadget is one such example of a process as outlined in this result. It shows that the tame post-selection we consider corresponds to applying a unitary in the input state $|\psi\rangle$ after renormalising by the probability of getting that outcome.  Thus if we wish to encode a tame post-selection into a Hamiltonian as previously outlined, then a renormalised unitary evolution will be encoded, which is a subtle alteration of the Kitaev-Feynman construction. However, as we will discuss next, this post-selection can also be used to solve very powerful computations.

\subsection{Computational Complexity of Post-selected Quantum Circuits}

If we consider quantum circuits in which post-selection is given ``for free", i.e. we can decide the property of an input conditioned on the outcome of some measurement. Aaronson was the first to define the complexity class $\mathsf{PostBQP}$ as the class of decision problems which can be decided by a quantum circuit that is of size polynomial in the input size that utilises post-selection \cite{aaronson2005quantum}. To be more formal, for an input $x$ of size $n$, a classical machine generates a description of a quantum circuit $C_{x}$ in time at most polynomial in $n$, hence it is a \textit{uniform circuit}. This quantum circuit takes as input the state $|00...0\rangle$, and has a set of post-selection qubits, and an output qubit. A measurement in the computational basis $\{\ket{0},\ket{1}\}$ is made on both the post-selection and output qubits, with the classical bit-strings $q_{post}$ and $q_{out}$ as the outcomes respectively. The circuit then post-selects on getting $q_{post}=(0,0...0):=\textbf{0}$, and then conditioned on these outcomes, the circuit decides whether to accept an input (if $q_{out}=1$) or not. We allow this decision process to fail with some non-zero probability, thus giving us the following complexity class.
\begin{definition}
	A promise problem $\mathcal{L}=(\mathcal{L}_{yes},\mathcal{L}_{no})$ is in \emph{$\mathsf{PostBQP}$} if for an input $x\in\{0,1\}^{n}$, there exists a uniform quantum circuit family $\{C_{x}\}$ with each $C_{x}$ taking $|00...0\rangle$ as input, and with post-selection and output qubits, which are all measured in the computational basis and giving outcomes as bit-strings $q_{post}$ and $q_{out}$ such that:
	\begin{align*}
	\text{if } x \in \mathcal{L}_{yes},  \quad & \text{P}[q_{out}=1|q_{post}=\bold{0}] \geq 2/3, \\
	\text{if } x \in \mathcal{L}_{no},  \quad & \text{P}[q_{out}=1|q_{post}=\bold{0}] \leq 1/3,
	\end{align*}
	where $\text{P}[q_{post}=\bold{0}] \geq 2^{-poly(n)}$ and $poly$ is some polynomial function.
\end{definition} 

We need the constraint that $\text{P}[q_{post}=\bold{0}] \geq 2^{-poly(n)}$ so that we have a consistent definition of $\mathsf{PostBQP}$: for any choice of universal gate set for the circuit $C_{x}$ we get exactly the same complexity class. This subtlety has been discussed and addressed by Kuperberg \cite{Kuperberg}. It should also be noted that the number of qubits being post-selected does not make any difference to the class as long as it is polynomial in the size of the input: for example, a circuit with only one post-selection qubit can be simulated by a circuit with $m>1$ post-selection qubits by having $m-1$ of the post-selection qubits be prepared in the state $\vert 0\rangle$ and not have any unitary act on them. Also, any circuit containing intermediate post-selected measurements can be simulated by one with only post-selection at the end on the registry by using the technique of deferring measurements at the cost of introducing a new ancilla \cite{aaronson2005quantum}. Aaronson first proved that $\mathsf{PP}\subseteq\mathsf{PostBQP}$, and claimed to prove that $\mathsf{PostBQP}\subseteq\mathsf{PP}$, but without putting a bound on the probability $\text{P}[q_{post}=\bold{0}]$. Here $\mathsf{PP}$ is the set of languages that are decided by probabilistic Turing machine with unbounded error, i.e. the error can be arbitrarily close to $1/2$. As mentioned earlier this is a relatively large complexity class in that it contains both $\mathsf{NP}$ and $\mathsf{QMA}$ (and thus $\mathsf{BQP}$).  As mentioned, Kuperberg pointed out this oversight, and if one bounds the probability as we have done, the containment $\mathsf{PostBQP}\subseteq\mathsf{PP}$ is indeed true, as pointed out in the following theorem.
\begin{theorem}
[Aaronson-Kuperberg, \cite{aaronson2005quantum,Kuperberg}] $\mathsf{PostBQP}=\mathsf{PP}$
\end{theorem}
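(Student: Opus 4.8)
The plan is to prove the two inclusions $\mathsf{PP}\subseteq\mathsf{PostBQP}$ and $\mathsf{PostBQP}\subseteq\mathsf{PP}$ separately, the equality being their conjunction. Throughout I will use the fact that $\mathsf{PostBQP}$ is robust to the choice of universal gate set (which is exactly what the bound $\mathrm{P}[q_{post}=\mathbf{0}]\geq 2^{-poly(n)}$ secures), so that I may fix a convenient real, dyadic-amplitude gate set such as Hadamard plus Toffoli.

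For $\mathsf{PP}\subseteq\mathsf{PostBQP}$ I would start from the counting characterisation of $\mathsf{PP}$: given $L\in\mathsf{PP}$ there is a polynomial-time predicate $f(x,\cdot)$ on strings $y\in\{0,1\}^m$ with $s:=|\{y:f(x,y)=1\}|$, normalised so that $x\in\mathcal{L}_{yes}\iff s\geq 2^{m-1}$. First I would prepare the uniform superposition $2^{-m/2}\sum_y |y\rangle|f(x,y)\rangle$, apply $H^{\otimes m}$ to the $y$-register, and post-select that register on $|0\cdots0\rangle$; the surviving qubit is then proportional to $(2^m-s)|0\rangle + s|1\rangle =: a|0\rangle + b|1\rangle$, and the post-selection succeeds with probability at least $2^{-m}$. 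The remaining task is to decide $\mathrm{sign}(b-a)$ for non-negative integers with $|b-a|\geq 1$ whenever $s\neq 2^{m-1}$. For this I would use a post-selection gadget that interferes this qubit against a known reference amplitude: post-selection lets me effectively rescale the two components (producing, for a chosen $k$, a state proportional to $a|0\rangle + 2^{k}b|1\rangle$), and sweeping a polynomial range of scales $k$ amplifies the gap $|b-a|$ — which is tiny relative to $\sqrt{a^2+b^2}\leq 2^{m+1}$ — into a constant measurement bias. Because post-selection is free, the exponentially small success probability incurs no cost.

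For $\mathsf{PostBQP}\subseteq\mathsf{PP}$ the route is through $\mathsf{GapP}$. With the Hadamard/Toffoli gate set, the amplitude $\langle z|C_x|0\cdots0\rangle$ equals $2^{-h/2}$ (with $h$ the number of Hadamards) times a signed integer counting positive minus negative computational paths, and that signed integer is a $\mathsf{GapP}$ function of $(x,z)$. Using closure of $\mathsf{GapP}$ under products (hence squares) and under exponential sums, the quantities $N(x):=2^{h}\,\mathrm{P}[q_{out}=1,q_{post}=\mathbf{0}]$ and $D(x):=2^{h}\,\mathrm{P}[q_{post}=\mathbf{0}]$ are non-negative integer-valued $\mathsf{GapP}$ functions, with $D(x)\geq 1$. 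The promise becomes $N/D\geq 2/3$ in the yes case and $N/D\leq 1/3$ in the no case, i.e. $3N-2D\geq 0$ versus $3N-2D\leq -D\leq -1$. Setting $g(x):=2(3N(x)-2D(x))+1 = 6N(x)-4D(x)+1\in\mathsf{GapP}$ then gives $g(x)>0$ exactly in the yes case, which is precisely a $\mathsf{PP}$ predicate.

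I expect the main obstacle to be the gate-set-independence underlying the $\mathsf{PostBQP}\subseteq\mathsf{PP}$ direction, which is exactly the point Kuperberg corrected. For an arbitrary gate set one must approximate each gate by dyadic reals; the induced error in the \emph{conditional} probability $\mathrm{P}[q_{out}=1\mid q_{post}=\mathbf{0}]$ is controlled only after dividing by $\mathrm{P}[q_{post}=\mathbf{0}]$, so I would choose the approximation precision polynomially large and invoke $\mathrm{P}[q_{post}=\mathbf{0}]\geq 2^{-poly(n)}$ to keep this error strictly below the $1/3$ promise gap, thereby preserving the decision and justifying the reduction to the exact dyadic circuit above. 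Without that lower bound the division can amplify the approximation error uncontrollably, and both the $\mathsf{GapP}$ reduction and the well-definedness of the class break down — which is the subtlety the statement attributes to \cite{Kuperberg}.
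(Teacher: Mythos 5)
The paper does not actually prove this theorem: it is imported as a known external result, cited to Aaronson and Kuperberg, so there is no in-paper proof to compare against. Your proposal correctly reconstructs exactly that cited argument --- Aaronson's postselected amplitude-rescaling and interference sweep for $\mathsf{PP}\subseteq\mathsf{PostBQP}$ (modulo standard technicalities such as the tie case $s=2^{m-1}$, which must be arranged to fall on the correct side of the promise), and the $\mathsf{GapP}$ path-counting argument with Kuperberg's gate-set caveat for $\mathsf{PostBQP}\subseteq\mathsf{PP}$ --- and this second direction is moreover the same machinery (Fortnow--Rogers amplitudes as $\mathsf{GapP}$ functions, integer combinations such as your $6N-4D+1$ witnessing the $2/3$ versus $1/3$ split) that the paper itself deploys when proving its own Theorem 2, $\mathsf{PostQMA}^{*}=\mathsf{PostBQP}=\mathsf{PP}$.
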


Therefore, if we consider quantum computations with post-selection we have access to great computational power. Equally, if we encode these computations in the ground state of a Hamiltonian through the Kitaev-Feynman construction, then given access to this state we have access to this computational power. In this section we want to formalise the computational complexity of the computations that are being translated into Hamiltonians through this construction. In particular, if this computation involves post-selection, what is the computational complexity of a circuit that takes an arbitrary state (a proof state) with some fixed ancilla qubits, and subjects it to unitary evolution and post-selection as well as a final measurement?

Just as $\mathsf{QMA}$ contains $\mathsf{BQP}$, but is distinct since it allows for access to quantum proof states, we now consider the analogue of $\mathsf{QMA}$ that allows for post-selection in Arthur's computation and thus contains $\mathsf{PostBQP}$; naturally, we call this analogue $\mathsf{PostQMA}$, where we have that $\mathsf{PostBQP}\subseteq\mathsf{PostQMA}$. The main difference between $\mathsf{PostBQP}$ and $\mathsf{PostQMA}$ is that, in the latter, a proof state $|\psi\rangle$ is an input into a quantum circuit, and without loss of generality we will fix it such that the size of the post-selection register is the same size as the state $|\psi\rangle$. We will argue that we can do this after presenting the definition as follows:

\begin{definition}
	A promise problem $\mathcal{L}=(\mathcal{L}_{yes},\mathcal{L}_{no})$ is in \emph{$\mathsf{PostQMA}$} if for an input $x\in\{0,1\}^{n}$, there exists a uniform quantum circuit family $\{V_{x}\}$ with each $V_{x}$ taking $|\psi\rangle|00...0\rangle$ as input and $|\psi\rangle$ consisting of a number of qubits $w$ at most polynomial in $n$, and with post-selection and output qubits, which are all measured in the computational basis and giving outcomes as bit-strings $q_{post}\in\{0,1\}^{w}$ and $q_{out}\in\{0,1\}$ such that:
	\begin{align*}
	\text{if } x \in \mathcal{L}_{yes},  \quad & \exists |\psi\rangle \textrm{ } \text{P}[q_{out}=1|q_{post}=\bold{0}] \geq 2/3, \\
	\text{if } x \in \mathcal{L}_{no},  \quad & \forall |\psi\rangle \textrm{ } \text{P}[q_{out}=1|q_{post}=\bold{0}] \leq 1/3,
	\end{align*}
	where $\text{P}[q_{post}=\bold{0}] \geq 2^{-poly(n)}$, and $poly$ is some polynomial function.
\end{definition} 

The first thing to observe about this class is that, as expected, $\mathsf{PostBQP}$ is contained in $\mathsf{PostQMA}$, since the circuit could just `ignore' the input $|\psi\rangle$ and replace it with the all-zeroes input $|00...0\rangle$, and we recover those computations in $\mathsf{PostBQP}$. Also, as mentioned, it is not a restriction to have the post-selection register have $w$ qubits (i.e. the same size at the proof state $|\psi\rangle$). Given a circuit where post-selection is on fewer than $w$ qubits, we can pad out the size of the register with ancillas prepared in the state $|0\rangle$. If a circuit has more than $w$ qubits in the post-selection register, then we can just take the NOR of measurement outcomes to reduce to the register being of size $w$. Also, again intermediate post-selected measurements do not affect the computational complexity for the same reasons they do not alter $\mathsf{PostBQP}$. Finally, while $\mathsf{PP}=\mathsf{PostBQP}\subseteq\mathsf{PostQMA}$ it is not clear what the best upper bound on $\mathsf{PostQMA}$ is. We can give the upper bound of $\mathsf{PostQMA}\subseteq\mathsf{NEXP}$, which is the class of problems decided by a non-deterministic exponential time Turing machine. In order to simulate the $\mathsf{PostQMA}$ computation in $\mathsf{NEXP}$, a non-deterministic Turing machine guesses a classical description of the state $|\psi\rangle$, and then simulates the quantum circuit on this state in exponential time to decide whether to accept the input. We leave it open whether a better bound can be found.

In our work we are concerned with tame post-selection as defined earlier in Def. \ref{tamepost}. In the definition, we can map every element to a post-selected quantum circuit. The ancillary system $\mathcal{H}_{anc}$ is the input to the quantum circuit initiated to the state $|00...0\rangle$, the system $\mathcal{H}_{sys}$ is associated with the proof given from Merlin in state $|\psi\rangle$, the unitary $U$ acting on $\mathcal{H}_{sys}\otimes\mathcal{H}_{anc}$ is the unitary in the quantum circuit, and the post-selection register is the set of qubits of $\mathcal{H}_{sys}$ with the projector $\Pi_{k}$ being $|00...0\rangle\langle 00...0|$. Therefore, tame post-selection is the condition that the probability of $q_{post}=\bold{0}$ is the same for all states $|\psi\rangle$.

Given all of the above elements we can consider the complexity class $\mathsf{PostQMA}^{*}$ associated with post-selected circuits such that they satisfy tame post-selection. We also see the convenience of having the post-selection register being the size of the proof state since the first $w$ qubits of the circuit can be the proof system, and without loss of generality, the post-selection register is again the first $w$ qubits of the circuit. This circuit is then an example of tame post-selection as in Def. \ref{tamepost}. The following definition of $\mathsf{PostQMA}^{*}$ can now be presented.

\begin{definition}
	A promise problem $\mathcal{L}=(\mathcal{L}_{yes},\mathcal{L}_{no})$ is in \emph{$\mathsf{PostQMA^*}$} if for an input $x\in\{0,1\}^{n}$, there exists a uniform quantum circuit family $\{V_{x}\}$ with each $V_{x}$ taking $|\psi\rangle|00...0\rangle$ as input and $|\psi\rangle$ consisting of a number of qubits $w$ at most polynomial in $n$, and with post-selection and output qubits, which are all measured in the computational basis and giving outcomes as bit-strings $q_{post}\in\{0,1\}^{w}$ and $q_{out}\in\{0,1\}$ such that:
	\begin{align*}
	\text{if } x \in \mathcal{L}_{yes},  \quad & \exists |\psi\rangle \textrm{ } \text{P}[q_{out}=1|q_{post}=\bold{0}] \geq 2/3, \\
	\text{if } x \in \mathcal{L}_{no},  \quad & \forall |\psi\rangle \textrm{ } \text{P}[q_{out}=1|q_{post}=\bold{0}] \leq 1/3,
	\end{align*}
	where $\text{P}[q_{post}=\bold{0}] \geq 2^{-poly(n)}$ and is the same for all $|\psi\rangle$, and $poly$ is some polynomial function.
\end{definition} 

At first sight, given Proposition \ref{lemprob}, it might not seem obvious that $\mathsf{PostQMA}^{*}$ is a class more powerful than $\mathsf{QMA}$ since tame post-selection results in unitary evolution of $|\psi\rangle$ up to renormalization. However, following similar arguments as outlined earlier with regard to $\subseteq\mathsf{PostQMA}$, we have that $\mathsf{PostBQP}\subseteq\mathsf{PostQMA}^{*}\subseteq\mathsf{PostQMA}$, therefore $\mathsf{PP}\subseteq\mathsf{PostQMA}^{*}$. In particular, to show that $\mathsf{PostBQP}\subseteq\mathsf{PostQMA}^{*}$, Arthur can just `replace' the state $|\psi\rangle$ with the all-zeroes state, $|00...0\rangle$ and this is permitted by the definition of $\mathsf{PostQMA}^{*}$. That is, take a circuit that is used to decide a problem in $\mathsf{PostBQP}$, which consists of preparing the input state $|00..0\rangle$ of $r$ qubits, feeding it into a quantum circuit with unitary $U$, and then post-selecting on the first qubit giving outcome $0$ for a computational basis measurement, and accepting if the second qubit gives $1$ for a computational basis measurement. This can readily be turned into a $\mathsf{PostQMA}^{*}$ algorithm for an arbitrary proof state $|\psi\rangle$ of, say, $w=r$ qubits provided by Merlin (if $w\neq r$ then either the input or proof state can be padded with extra ancillas by Arthur). First, Arthur prepares $w-1$ qubits in state $|00...0\rangle:=|0_{w-1}\rangle$ and the $r$ qubits in the state $|00...0\rangle:=|0_{r}\rangle$ from the $\mathsf{PostBQP}$ computation. Arthur's initial quantum state is then $|\psi\rangle|0_{w-1}\rangle|0_{r}\rangle$, he then applies $U$ to $|0_{r}\rangle$ and identity operators to all the other qubits. After the application of these unitaries, he will measure the state the $w-1$ qubits in the state $|0_{w-1}\rangle$ and the first qubit of the system in state $U|0_{r}\rangle$ in the computational basis, and post-select on getting the outcome $\bold{0}$. This probability is independent of the state $|\psi\rangle$ since there is no interaction between Arthur's qubits prepared in the state $|0_{w-1}\rangle|0_{r}\rangle$ and $|\psi\rangle$. Finally Arthur uses a measurement of the second qubit of the state $U|0_{r}\rangle$ to accept or reject. This construction also is compatible with the definition of $\mathsf{PostQMA}^{*}$ since the post-selection register was of size $w$. 

So far the only complexity theoretic upper bound on $\mathsf{PostQMA}^{*}$ is $\mathsf{PostQMA}\subseteq\mathsf{NEXP}$, but can we do better? The next result gives a strong bound on the class $\mathsf{PostQMA}^{*}$.

\begin{theorem}
$\mathsf{PostQMA}^{*}=\mathsf{PostBQP}=\mathsf{PP}$
\end{theorem}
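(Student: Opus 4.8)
The plan is to prove the two nontrivial containments and combine them with the already-established facts. Since Aaronson--Kuperberg give $\mathsf{PostBQP}=\mathsf{PP}$, and the padding argument preceding this theorem shows $\mathsf{PostBQP}\subseteq\mathsf{PostQMA}^{*}$ (so that $\mathsf{PP}\subseteq\mathsf{PostQMA}^{*}$), the entire content of the theorem is the reverse containment $\mathsf{PostQMA}^{*}\subseteq\mathsf{PostBQP}$. First I would fix a $\mathsf{PostQMA}^{*}$ instance: a uniform circuit $V_{x}=U$ acting on the $w$-qubit proof register $\mathcal{H}_{sys}$ (holding $\ket{\psi}$) together with the $\ket{00\ldots0}$ ancillas, followed by post-selection of the register onto $q_{post}=\mathbf{0}$ and a final measurement of $q_{out}$.

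The key step is to use tameness together with Proposition \ref{lemprob} to convert the acceptance condition into a clean eigenvalue problem. Writing $\Pi_{post}$ and $\Pi_{out}$ for the projectors implementing the two measurements, the conditional acceptance probability on proof $\ket{\psi}$ is $\|\Pi_{out}\Pi_{post}U\ket{\psi}\ket{0}\|^{2}/\|\Pi_{post}U\ket{\psi}\ket{0}\|^{2}$. Tameness asserts precisely that the denominator $p=\|\Pi_{post}U\ket{\psi}\ket{0}\|^{2}$ is independent of $\ket{\psi}$; by the argument of Proposition \ref{lemprob} this forces the sub-normalised map $\ket{\psi}\mapsto\Pi_{post}U\ket{\psi}\ket{0}$ to equal $\sqrt{p}\,V$ for an isometry $V$ on $\mathcal{H}_{sys}$. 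Consequently the conditional acceptance probability is exactly the quadratic form $\bra{\psi}B\ket{\psi}$, where $B=V^{\dagger}\Pi_{out}V$ satisfies $0\le B\le\mathbb{I}$ and is realisable by the post-selected circuit. The crucial gain from tameness is that the renormalisation is a state-independent constant, so no $\ket{\psi}$-dependence leaks into $B$. Maximising over Merlin's proof, the instance is a $yes$-instance iff $\lambda_{\max}(B)\ge 2/3$ and a $no$-instance iff $\lambda_{\max}(B)\le 1/3$.

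It then remains to decide this eigenvalue-threshold question in $\mathsf{PostBQP}$, i.e.\ without Merlin. The idea is that post-selection is itself powerful enough for Arthur to prepare the optimal proof, via a post-selected power method. Arthur prepares the maximally mixed state $\mathbb{I}/d$ on the proof register (for instance by holding half of a maximally entangled state), and applies the filter $B$ a total of $k$ times, each application being realised by appending the ancilla, running $U$, post-selecting $q_{post}=\mathbf{0}$, measuring and post-selecting $q_{out}=1$, and then uncomputing through $V^{\dagger}$. The probability that all $k$ rounds succeed is $\frac{1}{d}\,\mathrm{tr}(B^{2k})=\frac{1}{d}\sum_{i}\lambda_{i}^{2k}$, which for $k=\Theta(w)$ separates the two cases by an exponential factor ($\ge\frac{1}{d}(2/3)^{2k}$ in the $yes$ case versus $\le(1/3)^{2k}$ in the $no$ case). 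Feeding this amplified success event into the standard comparison gadget of Aaronson then yields a circuit whose conditional output bit distinguishes the cases with a constant gap, placing the problem in $\mathsf{PostBQP}$.

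The main obstacle is the bookkeeping of this amplification inside the strict definition of $\mathsf{PostBQP}$. Two points require care. First, one must check that a polynomial number of rounds $k$ really does overcome the dimension factor $d=2^{w}$; the computation above shows $k=\Theta(w)$ suffices to open an exponential multiplicative separation, which is ample once converted to a conditional probability. Second, and more delicately, the definition of $\mathsf{PostBQP}$ demands $\mathrm{P}[q_{post}=\mathbf{0}]\ge 2^{-\mathrm{poly}(n)}$ in \emph{both} cases, whereas in a $no$-instance the all-success probability $\frac{1}{d}\sum_{i}\lambda_{i}^{2k}$ may be far smaller, or even zero. This is repaired by a standard padding trick: one adjoins a branch that passes post-selection with a fixed, known probability bounded below by $2^{-\mathrm{poly}(n)}$ and outputs a neutral bit, which guarantees the global post-selection probability without altering the conditional output gap. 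Once these technicalities are discharged we obtain $\mathsf{PostQMA}^{*}\subseteq\mathsf{PostBQP}$, and combining this with $\mathsf{PP}=\mathsf{PostBQP}\subseteq\mathsf{PostQMA}^{*}$ yields the chain of equalities $\mathsf{PostQMA}^{*}=\mathsf{PostBQP}=\mathsf{PP}$.
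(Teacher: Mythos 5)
Your route is genuinely different from the paper's. The paper never constructs a simulating post-selected circuit: it proves $\mathsf{PostQMA}^{*}\subseteq\mathsf{PP}$ by a counting argument in the style of Vyalyi \cite{vyalyi2003qma}, using tameness to evaluate $\mathrm{P}[q_{post}=\mathbf{0}]$ on the all-zeroes proof (a $\mathsf{GapP}$ quantity divided by $2^{h(x)}$, by Fortnow--Rogers \cite{fortnow}), writing the optimal joint acceptance probability as $\lambda_{\max}(A)$, and replacing the maximisation over proofs by the sandwich $\lambda_{\max}^{d}\le\mathrm{tr}(A^{d})\le2^{w}\lambda_{\max}^{d}$ with $d=w+1$, where $\mathrm{tr}(A^{d})$ is again $\mathsf{GapP}$-computable; the $\mathsf{GapP}$-ratio characterisation of $\mathsf{PP}$ then finishes. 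Your reduction to the threshold problem for $\lambda_{\max}(B)$, $B=V^{\dagger}\Pi_{out}V$, and your use of $\mathrm{tr}(B^{2k})$ is the same trace-power idea, but realised physically inside $\mathsf{PostBQP}$ as a post-selected power method on the maximally mixed state, with Merlin replaced by $\mathbb{I}/d$. This is an attractive, more constructive alternative, but as written it has two genuine gaps.

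First, Arthur cannot simply ``uncompute through $V^{\dagger}$'': $V$ is defined only implicitly by tameness via $\Pi_{post}U\ket{\psi}\ket{0}=\sqrt{p}\,\ket{\mathbf{0}}\otimes V\ket{\psi}$, and is not a gate of the uniform family. This step is essential, not cosmetic: without it (say, swapping registers between rounds instead), the $k$-round Kraus operator is $(\Pi_{out}V)^{k}$ rather than $B^{k}$, whose norm has no useful relation to $\lambda_{\max}(B)$ --- e.g.\ for $V=X$ and $\Pi_{out}=\ket{1}\bra{1}$ one has $\lambda_{\max}(B)=1$ yet $(\Pi_{out}V)^{2}=0$, so the all-success probability vanishes on a yes-instance after two rounds. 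The gap is repairable, and the repair is itself a second use of tameness: since $L=\sqrt{p}V$ satisfies $L^{\dagger}L=p\,\mathbb{I}$ and (with the paper's equal-dimension convention, so $V$ is unitary) $LL^{\dagger}=p\,\mathbb{I}$, the map $\sqrt{p}\,V^{\dagger}$ is implemented by preparing the just-measured proof register in $\ket{\mathbf{0}}$, running $U^{\dagger}$, and post-selecting the ancilla register on $\ket{0\dots0}$. You must supply some such construction; Proposition \ref{lemprob} alone does not hand you a circuit for $V^{\dagger}$.

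Second, your probability bookkeeping drops the cost of the post-selections inside each round: the probability that all $k$ rounds succeed is not $\tfrac{1}{d}\mathrm{tr}(B^{2k})$ but $\tfrac{p^{k}}{d}\mathrm{tr}(B^{2k})$ (and $\tfrac{p^{2k}}{d}\mathrm{tr}(B^{2k})$ once $V^{\dagger}$ is realised by reverse post-selection as above). The exponential yes/no ratio survives, precisely because tameness makes the factor state-independent and common to both cases; but the omission breaks your proposed repair. The window in which a reference probability must lie, roughly $\bigl[\,p^{2k}(1/3)^{2k},\;p^{2k}d^{-1}(2/3)^{2k}\,\bigr]$, slides with $p^{2k}$, and Arthur cannot compute $p$ --- the promise only gives $p\ge2^{-poly(n)}$, and $p$ ranges over an interval of multiplicative width $2^{2k\cdot poly(n)}$, which swamps your separation $2^{2k}/2^{w}$. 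Hence no \emph{fixed, known} padding probability can sit inside the window for all admissible $p$, and the ``neutral branch'' as you describe it fails. The fix is to engineer the reference branch to carry the identical unknown factor: run the same $k$ rounds but omit the $q_{out}=1$ post-selections (its success probability is then exactly $p^{2k}$), attenuate by an explicitly computable constant, and compare branches conditioned on either succeeding; the $p$-dependence then cancels. With these two repairs your argument goes through and yields a self-contained, circuit-level alternative to the paper's $\mathsf{GapP}$ proof.
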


\begin{proof}
As discussed earlier, we have the inclusion that $\mathsf{PP}\subseteq\mathsf{PostQMA}^{*}$, so it remains to prove that $\mathsf{PostQMA}^{*}\subseteq\mathsf{PP}$. We prove this using $\mathsf{GapP}$ functions, which is the difference between the number of accepting and rejecting paths of a non-deterministic Turing machine. More formally, given a non-deterministic Turing machine $\mathcal{N}$ and input $x$, then $N_{acc}(x)$ and $N_{rej}(x)$ is the number of accepting and rejecting paths of $\mathcal{N}$ respectively given $x$, then a $\mathsf{GapP}$ function is $f(x)=N_{acc}(x)-N_{rej}(x)$ \cite{fenner}. The complexity class $\mathsf{PP}$ is defined as those languages $\mathcal{L}$ where $f$ and $g$ are $\mathsf{GapP}$ functions such that if input $x$ is in $\mathcal{L}$ then $2/3\leq f(x)/g(x)\leq 1$, and if $x\notin\mathcal{L}$ then $0\leq f(x)/g(x)\leq 1/3$.

Returning to the circuits in the class $\mathsf{PostQMA}^{*}$, the probability of accepting conditioned on a particular outcome happening is $\text{P}[q_{out}=1|q_{post}=\bold{0}]=\text{P}[q_{out}=1,q_{post}=\bold{0}]/\text{P}[q_{post}=\bold{0}]$. We will show that this conditional probability is the quotient of two $\mathsf{GapP}$ functions and thus $\mathsf{PostQMA}^{*}\subseteq\mathsf{PP}$. Without loss of generality we will take the universal gate set of the quantum circuits to be the Hadamard and Toffoli gates.

Firstly, since the definition of tame post-selection means that $\text{P}[q_{post}=\bold{0}]$ is the same for all possible states $|\psi\rangle$ then we can evaluate this probability for the case where $|\psi\rangle=|00...0\rangle$, the all-zeroes input. That is, we have a quantum circuit and wish to calculate $\text{P}[q_{post}=\bold{0}]$ for this circuit. Using a result of Fortnow and Rogers, this probability will be equal to $g(x)/2^{h(x)}$ for some $g$ being a $\mathsf{GapP}$ function and $h(x)$ being an efficiently computable function dependent on the number of Hadamard gates in the circuit for input $x$ \cite{fortnow}. However, $\text{P}[q_{out}=1,q_{post}=\bold{0}]$ may not be the same for $|\psi\rangle$ as it is for the all-zeroes input, so we need to address this issue separately.

The complicating factor for evaluating $\text{P}[q_{out}=1,q_{post}=\bold{0}]$ is that it is a probability for an input state $|\psi\rangle$, so we can calculate the maximal value of this probability (for all states $|\psi\rangle$) to decide whether the input is accepted. Firstly, we divide the qubits into the proof qubits and ancillae qubits, denoted $sys$ and $anc$ respectively. Following Vyalyi \cite{vyalyi2003qma}, we can see this maximal probability as the largest eigenvalue $\lambda_{max}$ of the operator 
\begin{equation*}
A=\textrm{tr}_{anc}\bigg(\big(|00...0\rangle\langle 00...0|_{q_{post}}\otimes|1\rangle\langle 1|_{q_{out}}\big)V_{x}\big(\mathbb{I}_{sys}\otimes|00..0\rangle\langle 00...0|_{anc}\big)V_{x}^{\dagger} \bigg).
\end{equation*}
Since $\lambda_{max}^{d}\leq\textrm{tr}(A^{d})\leq 2^{w}\lambda_{max}^{d}$ since the operator $A$ is a $2^{w}$-by-$2^{w}$ operator, if we choose $d=w+1$, then we have the following useful relationships:
\begin{eqnarray*}
\textrm{if max }\text{P}[q_{out}=1,q_{post}=\bold{0}]=(\lambda_{max}2^{h(x)})/g(x)\leq 1/3&\implies &(3/2)^{d-1}\big(2^{h(x)})\textrm{tr}(A^{d})\big)/g(x)\leq 1/3,\\
\textrm{if max }\text{P}[q_{out}=1,q_{post}=\bold{0}]=(\lambda_{max}2^{h(x)})/g(x)\geq 2/3&\implies &(3/2)^{d-1}\big(2^{h(x)})\textrm{tr}(A^{d})\big)/g(x)\geq 2/3.
\end{eqnarray*}
It was proven by Vyalyi that $\textrm{tr}(A^{d})=k(x)2^{-dh'(x)}$ where $k$ is a $\mathsf{GapP}$ function, and $h'(x)$ is another efficiently computable function dependent on the number of Hadamards in the circuit \cite{vyalyi2003qma}. Since $(3/2)^{d-1}2^{h(x)-dh'(x)}$ is an efficiently computable function and an efficiently computable function multiplied by a $\mathsf{GapP}$ function is another $\mathsf{GapP}$ function \cite{fenner}, we can define $f(x)=(3/2)^{d-1}2^{h(x)-dh'(x)}k(x)$ to obtain
\begin{eqnarray*}
x\in\mathcal{L}_{yes}& \textrm{if } f(x)/g(x)\geq 2/3,\\
x\in\mathcal{L}_{no}& \textrm{if } f(x)/g(x)\leq 1/3,
\end{eqnarray*}
if the promise problem $\mathcal{L}=(\mathcal{L}_{yes},\mathcal{L}_{no})$ is in $\mathsf{PostQMA}^{*}$, and thus $\mathsf{PostQMA}^{*}\subseteq\mathsf{PP}$.
\end{proof}

Therefore, tame post-selection restricts the computational power at hand to just be that which is found in standard post-selected quantum circuits without proof states. As a result, if we use the Kitaev-Feynman construction to build a Hamiltonian encoding a tame post-selected quantum circuit into its nullspace, a state in the nullspace just encodes problems in $\mathsf{PP}$, and nothing more powerful. This might sound very powerful since $\mathsf{QMA}\subseteq\mathsf{PP}$, with equality thought unlikely to hold \cite{vyalyi2003qma}. On the other hand, we now contrast $\mathsf{PostQMA}^{*}$ with another generalisation of $\mathsf{QMA}$.

Lin and Fefferman described the class $\mathsf{QMA_{exp}}$ in Ref. \cite{fefferman2016quantum}. This class $\mathsf{QMA_{exp}}$ is defined in the same way as $\mathsf{QMA}$, except the gap between $\alpha$ and $\beta$ now satisfies $\alpha-\beta\geq 2^{-poly(n)}$ for $poly$ being some polynomial in the input size $n$. It was proven by Lin and Fefferman that $\mathsf{QMA_{exp}}=\mathsf{PSPACE}$, the class of problems decided by a deterministic classical computer using an amount of space at most polynomial in the size of the input. In addition they also described another natural generalisation of the \textsc{k-Local Hamiltonian} problem, where the promise gap $b-a$ is permitted to be separated by only an inverse exponential, i.e. $b-a\geq 2^{-poly(n)}$ for some polynomial function $poly$. This problem is called the \textsc{Precise k-Local Hamiltonian} problem and was shown to be complete for $\mathsf{QMA_{exp}}$.

How does $\mathsf{QMA_{exp}}$ relate to $\mathsf{PostQMA}^{*}$? It is known that $\mathsf{PP}\subseteq\mathsf{PSPACE}$, with good evidence that equality does not hold \citep{aspnes}. Therefore, with this computational complexity evidence at hand we could argue that the \textsc{k-Local Hamiltonian} problem constructed from tame post-selected quantum circuits will not be as hard to solve as an arbitrary \textsc{Precise k-Local Hamiltonian} problem. It seems that the Hamiltonians resulting from tame post-selected quantum circuits could live in an intermediate regime between $\mathsf{QMA}$ and $\mathsf{QMA_{exp}}$.

In this section we have given some evidence from computational complexity that the Kitaev-Feynman circuit-to-Hamiltonian circuit when applied to tame post-selected quantum circuits can result in encoding hard computations in the ground-space. However, this computation is still bounded in a sensible way, thus reinforcing the notion of it being tame. In particular, tame post-selected quantum circuits will probably not have the power of $\mathsf{QMA_{exp}}$ even though we are permitted to post-select on events which occur with an exponentially small probability. In the next section we will look at families of tame post-selected circuits and their corresponding Hamiltonians, and in particular numerically study the gap between the ground-state and the next highest energy state.

\section{PostIQP and Hadamard Gadgets}\label{sec4}

One surprising aspect of post-selection is that it can limit the kinds of circuits we need to consider to get the complexity class $\mathsf{PostBQP}$. In particular, the circuit can be an \textit{Instantaneous Quantum Polytime} (IQP) circuit \citep{bremner2010classical}, which consists of preparing a set of qubits in the state $|0\rangle$, applying a set of unitary gates that are diagonal in the local Pauli-$X$ basis, and then measuring in the computational basis. Since all of the unitary gates are diagonal in the same basis, they all commute with each other and can be considered to be implemented ``simultaneously'', in some sense. Equivalently, an IQP circuit consists of preparing a set of qubits in the state $|+\rangle$, applying a set of unitary gates that are diagonal in the local Pauli-$Z$ basis, then Hadamard gates to all of the qubits, and finally measuring qubits in the computational basis. To be a uniform IQP circuit, given a classical input $x\in\{0,1\}^{n}$, the description of the gates which are diagonal in the Pauli-$Z$ basis must be generated by a classical computer in time at most polynomial in $n$. By considering postselection, the complexity class, $\mathsf{PostIQP}$ is obtained, for which Bremner, Jozsa, and Shepherd proved that $\mathsf{PostIQP}=\mathsf{PostBQP}$, the proof of which is based on the Hadamard gadget as outlined earlier. Thus, post-selection drastically simplifies the kinds of circuits we need to consider. 

The result of all of this is that we can define $\mathsf{PostQMA}$ such that Arthur's circuit is an IQP circuit, and by virtue of $\mathsf{PostQMA}^{*}=\mathsf{PostBQP}$ we have exactly the same computational power. Therefore we can restrict to considering Hamiltonians constructed from IQP circuits without loss of generality. Taking these circuits as inspiration we will consider how the gap $b-a$ of the Hamiltonian scales in the size of the post-selected circuit with which we start.

We will consider two classes of post-selected circuit based on the Hadamard gadget, and numerically analyse the corresponding gaps $b-a$ between the ground state energy and the energy of the first excited state. The Hadamard gadget satisfies the notion of tame post-selection so is an ideal candidate for which we can build circuits. It should be noted that in both families of circuit the total probability of success of the post-selected event decreases exponentially in the size of the circuit. However, in terms of the gap $b-a$, in one family denoted as $\mathcal{F}_{1}$ this appears to decrease exponentially in the size of the post-selected circuit, but in the other family, denoted by $\mathcal{F}_{2}$, it seems to decrease polynomially in the size of the circuit. Thus, the intuition that the probability of success dictates the gap $b-a$ of the corresponding Hamiltonian is not immediately obvious. To emphasize this point, in both families of circuits the corresponding Hamiltonians all have terms that have operator norms that are bounded by some polynomial in the circuit size. Therefore, the circuit family $\mathcal{F}_{2}$ seems to encode a Hamiltonian problem contained in $\mathsf{QMA}$, while family $\mathcal{F}_{1}$ does not.

\subsection{Circuit family $\mathcal{F}_{1}$}

First, we consider an arbitrary quantum state and $n$ post-selection qubits initialised in the $|+\rangle$ state.  Neighbouring qubits are entangled with a CZ gate and are then measured one after the other in the Pauli-$X$ basis and post-selecting upon receiving outcome $|+\rangle$. An example of such as circuit for three gates is shown in Fig. \ref{Cascade}. Effectively, the state of the first qubit is teleported on the second and acted upon by a Hadamard gate. This is then the input to a new Hadamard gadget, implementing a new Hadamard gate. The effect of this circuit is to sequentially teleport the state $|\psi \rangle$ from qubit to qubit, each time applying a Hadamard gate to it, thus causing it to oscillate between the state $|\psi \rangle$ and $H|\psi \rangle$.
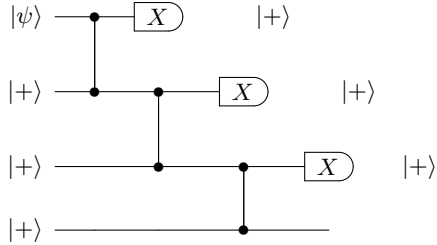
\begin{figure}[h!]
	\centering
	\mbox{
		\Qcircuit @C=1.5em @R=1.9em {
			\lstick{|\psi \rangle }	& \ctrl{1} 	   &   \measureD{X} & \rstick{\ket{+}}	\\
			\lstick{|+\rangle }	& \ctrl{-1} &\ctrl{1} 	  &  \measureD{X} & \rstick{\ket{+}}\\
			\lstick{|+\rangle }	& \qw 		 & \ctrl{-1} & \ctrl{1}  &  \measureD{X} & \rstick{\ket{+}}\\
			\lstick{|+\rangle }	& \qw  		 & \qw 		& \ctrl{-1} & \qw  & 
		} }
		\caption{Three Hadamard gadgets are implemented using three additional qubits.} 
		\label{Cascade}
	\end{figure}

If $n$ Hadamard gates are applied, then we need to implement $n$ Hadamard gadgets, which requires$n$ ancillary post-selection qubits and $n$ measurements. The space of our qubits will be $2^{n+1}$, and the clock will be qudit of dimension $2n+1$. The propagation Hamiltonian will be made of $2n$ terms, where odd terms correspond to Kitaev's unitary Hamiltonians and even terms to a projection Hamiltonian. Explicitly, we can write the propagation Hamiltonian as: 
\begin{align}\label{mayhamexp}
\begin{split}
H_{prop}=&\sum_{j=0}^{n-1} \frac{1}{2}\Big[-CZ^{(j+1, j+2)} \otimes \big(  |2j\rangle \langle 2j+1|+ |2j+1\rangle \langle 2j|  \big)  
+ \mathbb{I} \otimes  ( |2j\rangle \langle 2j|+ |2j+1\rangle \langle 2j+1|) \Big] \\
&+\frac{1}{3}\Pi^{(j+1)} \otimes \Big[ 2 |2j+1\rangle \langle 2j+1| - \frac{1}{\sqrt{2}} |2j+1\rangle \langle 2j+2|
-\frac{1}{\sqrt{2}}|2j+2\rangle \langle 2j+1| + |2j+2\rangle \langle 2j+2| \Big]\\
&+\left(\mathbb{I}-\Pi\right)^{(j+1)} \otimes |2j+2\rangle \langle 2j+2|,
\end{split}
\end{align}
where $CZ^{(i, j)}$ denotes the control-$Z$ operator acting on qubits $i$ and $j$ with identity on all others, and where $\Pi^{(i)}$ corresponds to the projector $|+\rangle \langle +|$ acting on qubit $i$, with identity on all other qubits. 

 We constructed the propagation Hamiltonian of the circuit, and computed its smallest non-zero eigenvalue, and illustrate its reciprocal $\lambda^{-1}_{min}$ in Fig. \ref{Exponential Scaling}, with an exponential function fitted to the data. Therefore, the intuition that as the probability of success decreases exponentially, the gap closes as an inverse exponential seems to be correct.
\begin{figure}[h!]
\centering 
\includegraphics[scale=0.45]{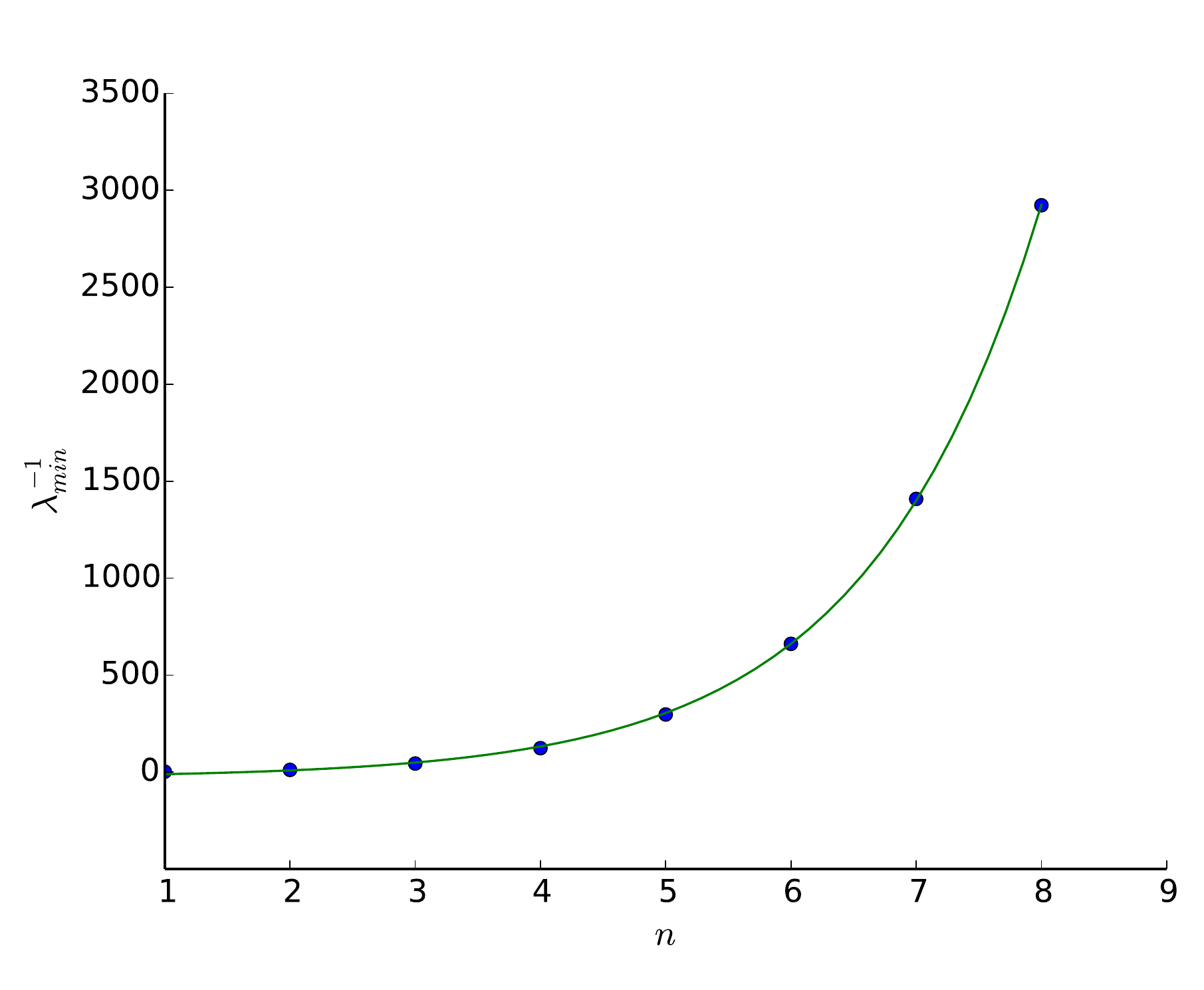}
\caption[Exponential Scaling.]{Scaling of the eigenvalues of the Hamiltonian as a function of Hadamard gadgets, with Hamiltonian in equation \ref{mayhamexp}. Here, we fit the data to an exponential function $y=Ae^{bc} +c$, yielding $A=8.802$, $b=0.727$ and $c=-28.767$.}
\label{Exponential Scaling}
\end{figure}

\subsection{Circuit family $\mathcal{F}_{2}$}

In this circuit family, again the input to the circuit is $|\psi \rangle$ with an ancillary qubit initialised as $|+\rangle$. One round of the circuit will correspond to the application of the controlled-$Z$ gate, a measurement of the first qubit in the Pauli-$X$ basis with post-selection on outcome $|+\rangle$, followed by another controlled-$Z$ gate and a final measurement on the second qubit in the Pauli-$X$ basis with post-selection on outcome $|+\rangle$.  With post-selection, this initial circuit effectively implements the identity, as the output is given by $|\psi \rangle \otimes |+\rangle$. By repeating this process, we obtain a circuit such as that in Fig. \ref{Trivial.}, where each box corresponds to this initial circuit involving two post-selections.
\begin{figure}[h!]
	\centering
	\mbox{
		\Qcircuit @C=1.3em @R=2.2em {
			\lstick{|\psi \rangle }	&  \qw             & \ctrl{1}  \ar@{.}[]+<0em,1em>;[d]+<0em,-2em> &   \measureD{|+\rangle} \ar@{.}[]+<0em,1em>;[d]+<0em,-2em> & \ctrl{1} \ar@{.}[]+<0em,1em>;[d]+<0em,-2em>    & \qw 	\ar@{.}[]+<0em,1em>;[d]+<0em,-2em>             & \ctrl{1}  \ar@{.}[]+<0em,1em>;[d]+<0em,-2em>  &  \measureD{|+\rangle} \ar@{.}[]+<0em,1em>;[d]+<0em,-2em> & \ctrl{1}\ar@{.}[]+<0em,1em>;[d]+<0em,-2em> &  \qw \ar@{.}[]+<0em,1em>;[d]+<0em,-2em> \ar@{.}[]+<0em,1em>;[d]+<0em,-2em>& \ar@{.}[];\\
			\lstick{|+\rangle }	     & \qw  & \ctrl{-1} &   \qw  	            &  \ctrl{-1} & \measureD{|+\rangle}   & \ctrl{-1}  & \qw                   & \ctrl{-1} & \measureD{|+\rangle} & \text{ }  \gategroup{1}{2}{2}{6}{.7em}{--}  \gategroup{1}{7}{2}{10}{.7em}{--} \\
			&  & 1 & 2 & 3 & 4 & 5 & 6 & 7 & 8 
		} }
		\caption{Each box is a post-selected circuit implementing the identity on the input qubit $|\psi\rangle$ and recycling qubits since the ancillas will always be in the state $|+\rangle$.} 
		\label{Trivial.}
	\end{figure}
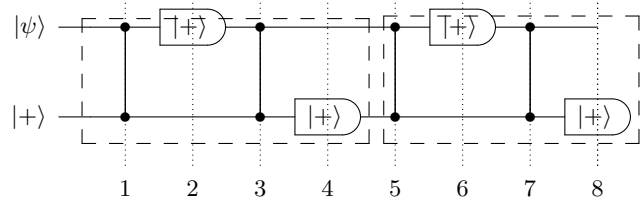			

The propagation Hamiltonian $H_{prop}$ in the Kitaev-Feynman construction corresponding to this circuit is now constructed according to the number of rounds $n$ of each gadget in the box of Fig. \ref{Trivial.}. One box corresponds to a unitary operator being applied, a renormalised projector, a unitary operator, and a final renormalised projector $H_{prop}=\sum_{j=0}^{n-1} H_j$, where

\begin{align}\label{mayham}
\begin{split}
H_j=&\frac{1}{2}(-CZ \otimes (|j\rangle \langle j+1|  +|j+1\rangle \langle j|)   
 + \mathbb{I} \otimes  (|j\rangle \langle j| +  |j+1\rangle \langle j+1|) 
+\frac{1}{3}(|+\rangle \langle +|\otimes \mathbb{I}) \otimes \Big[ 2 |j+1\rangle \langle j+1| \\
&- \frac{1}{\sqrt{2}} |j+1\rangle \langle j+2|-\frac{1}{\sqrt{2}}|j+2\rangle \langle j+1| + |j+2\rangle \langle j+2| \Big] +(|-\rangle \langle -|\otimes \mathbb{I}) \otimes |j+2\rangle \langle j+2|\\
&+\frac{1}{2}(-CZ \otimes (|j+2\rangle \langle j+3| + |j+3\rangle \langle j+2|)   
 + \mathbb{I} \otimes  (|j+2\rangle \langle j+2| +  |j+3\rangle \langle j+3|) \\
&+\frac{1}{3}(\mathbb{I}\otimes |+\rangle \langle +|) \otimes \Big[ 2 |j+3\rangle \langle j+3| - \frac{1}{\sqrt{2}} |j+3\rangle \langle j+4|
-\frac{1}{\sqrt{2}}|j+4\rangle \langle j+3| + |j+4\rangle \langle j+4| \Big]\\
&+(\mathbb{I}\otimes |-\rangle \langle -|) \otimes |j+4\rangle \langle j+4|.
\end{split}
\end{align}
The dimension of the auxiliary clock depends on the number of rounds we implement and is given by $4n+1$. We numerically find the smallest non-zero eigenvalue and depict its reciprocal $\lambda^{-1}_{min}$ as it scales in $n$ in Fig. \ref{Polynomial Scaling.}, with a quadratic function fitted to the data. The smallest non-zero eigenvalue of the propagation Hamiltonian thus seems to scale as an inverse polynomial function.

\begin{figure}[h!]
\centering 
\includegraphics[scale=0.45]{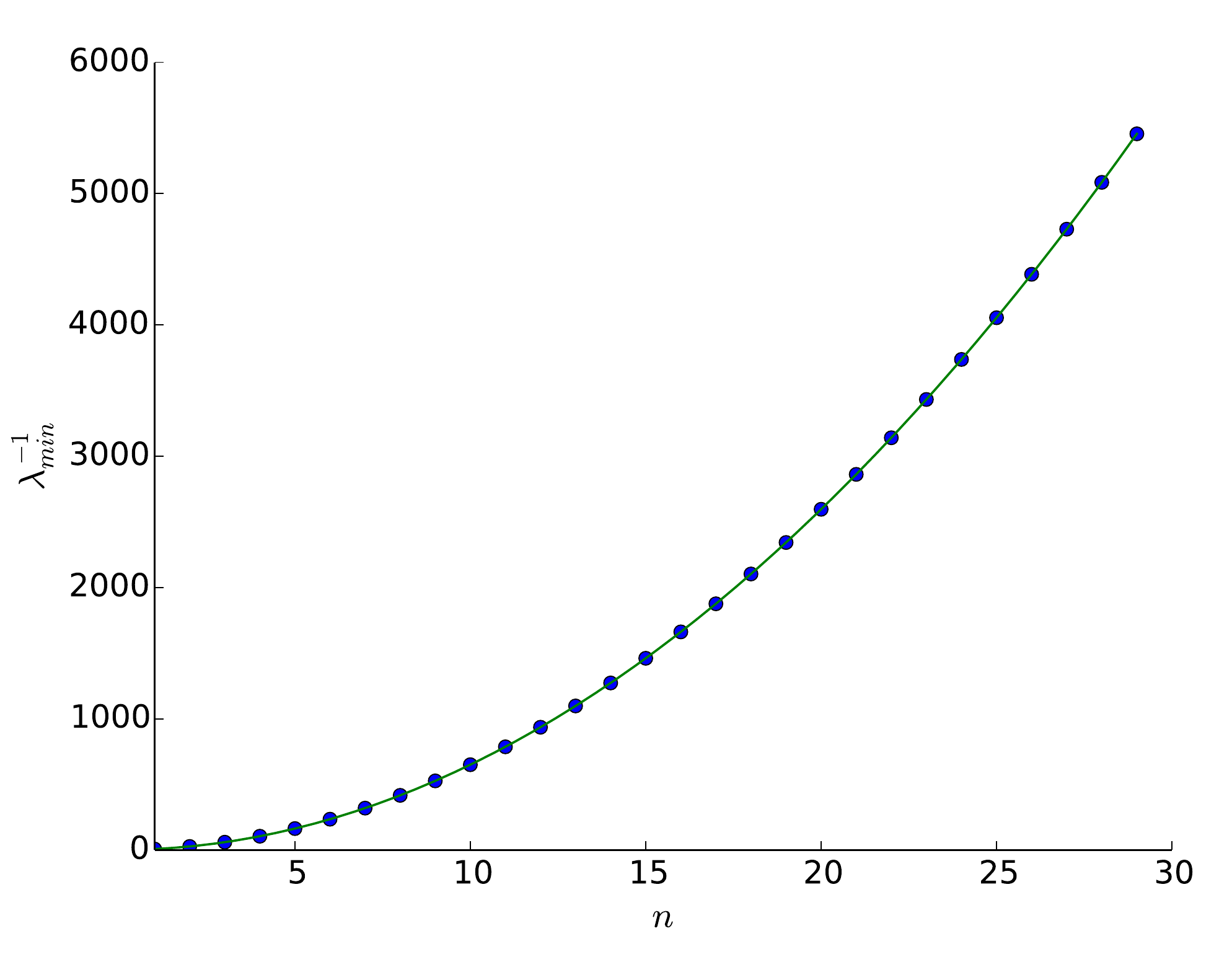}
\caption[Polynomial Scaling]{Scaling of the reciprocal of the smallest eigenvalue of the Hamiltonian corresponding to circuit family $\mathcal{F}_{2}$, as shown in equation \ref{mayham}. Here, we fit the data to a quadratic function $y=ax^2 +bx +c$, and obtain $a=6.5$, $b=0.04$ and $c=1.4$.}
\label{Polynomial Scaling.}
\end{figure}

Thus, we have two quantum circuits effectively implementing the same trivial operation, but which nonetheless exhibit a starkly different behaviour. In each case, the Hamiltonian encoding the circuit is built, and its smallest non-zero eigenvalue plotted as a function of system size. Two starkly distinct cases are observed. In the first, the eigenvalues scales exponentially with system size, whereas in the second, a polynomial --- quadratic --- scaling is observed, even though both circuits are effectively implementing the same unitary operator. Yet, the spectrum of their respective Hamiltonians are radically different. 
\section{Discussion}\label{sec5}
In this work, we considered the \textsc{k-Local Hamiltonian} problem, and in particular the circuit-to-Hamiltonian construction used in the proof of $\mathsf{QMA}$-completeness due to Kitaev. We introduced a new class of Hamiltonians encoding non-unitary computation and postselection, and investigated the scaling of its smallest non-zero eigenvalue with system size. This was achieved by extending the circuit-to-Hamiltonian construction to evolutions via renormalised projectors, which map pure states to pure states. In order for these Hamiltonians to not depend on the input state, we introduce the idea of a restricted form of postselection, which we call tame postselection, where the probability of an event occurring is input independent. We then considered the computational complexity of the computations that are being encoded in this extended construction, and showed that they are exactly the quantum computations with post-selection as defined by Aaronson. Therefore, given certain assumptions about computational complexity, solving the \textsc{k-Local Hamiltonian} problem given Hamiltonians constructed from circuits with post-selection is harder than the standard problem that is $\mathsf{QMA}$-complete, but not as hard as the so-called \textsc{Precise k-Local Hamiltonian} problem which allows for gap that is exponentially small between energy eigenstates.

The main direction for future research is to get a better characterisation of Hamiltonians resulting from post-selected quantum circuits. We numerically explored a couple of examples of post-selected circuit families that exhibited similar behaviour from the point-of-view of state transformation and the probability of success of all post-selected events exponentially decreased in the size of the circuits, however their corresponding Hamiltonians exhibited very different behaviour. It seems that one of the Hamiltonians can be solved within $\mathsf{QMA}$ since there was a polynomially small gap between the ground state and first excited state energies, the other family of Hamiltonians seemed to have an exponentially small gap. Therefore, this gap might not be determined by the probability of success for the post-selected events nor the effective unitary implemented by the tame post-selection as indicated by Prop. \ref{prop1}. The natural question is then what determines this gap?

One major hope for this work is that it is useful in demonstrating that the simulation of certain Hamiltonians is hard, such as in the work of Ref. \cite{bouland}. Since post-selection is a useful tool in proving such hardness results, it seems natural to build post-selection into the Hamiltonians and then make arguments based on the \textsc{k-Local Hamiltonian} problem. By bringing all of these elements together we may get a better understanding of what kinds of quantum systems are hard to classically simulate and why.

\textit{Acknowledgements} - This work was supported by EPSRC, partly through the Centre for Doctoral Training in Delivering Quantum Technologies [EP/L015242/1], and the Networked Quantum Information Technologies (NQIT) Hub [EP/M013243/1].

 \onecolumngrid  
\end{document}